\tikzstyle{normalNode}=[circle, fill=black, draw, scale=0.75]
\tikzstyle{normalEdge}=[very thick]
\newtheorem{lemma}{Lemma}[section]
\newtheorem{theorem}{Theorem}[section]
\newcommand{\classNP}{{\sf NP}}
   \def\1{{\mathbbm 1}}
   \def\R{{\mathbb R}}
   \def\N{{\mathbb N}}
   \def\cB{{\mathcal B}}
   \def\cC{{\mathcal C}}
   \def\cI{{\mathcal I}}
   \def\cM{{\mathcal M}}
   \def\cP{{\mathcal P}}
   \def\cS{{\mathcal S}}
\title{Resource Buying Games}
\date{}
\author{%
  Tobias Harks\thanks{Maastricht University, Email: \texttt{t.harks@maastrichtuniversity.nl}.
  } \and Britta Peis\footnotemark[1]~\thanks{Technical University Berlin,
  Email: \texttt{peis@math.tu-berlin.de}.}
}
\begin{document}

\maketitle

\begin{abstract}
In \emph{resource buying games} a set of players  jointly buys a subset of a finite resource set $E$ (e.g., machines, edges, or nodes in a digraph).
The cost  of a resource $e$ depends on the number (or load) of players using $e$,
and has to be paid completely by the players before it becomes available.
Each player $i$ needs at least one set of a predefined family $\cS_i\subseteq 2^E$ to be available.
Thus, resource buying games can be seen as a variant of congestion games in which the load-dependent costs
of the resources can be shared arbitrarily among the players.
A strategy of player $i$ in resource buying games is a tuple consisting of one of $i$'s desired configurations $S_i\in \cS_i$
together with a payment vector $p_i\in \R^E_+$ indicating how much  $i$ is willing to contribute
towards the purchase of the chosen resources.
In this paper, we study the existence and computational complexity of pure Nash equilibria (PNE, for short)
of resource buying games.
In contrast to classical congestion games for which equilibria are guaranteed to exist,
 the existence of equilibria in resource buying games strongly depends on the underlying structure of the families $\cS_i$
and the behavior of the cost functions. We show that for marginally non-increasing cost functions,
matroids are exactly the right structure to consider,
and that resource buying games with marginally non-decreasing cost functions
always admit a PNE.

\end{abstract}

\section{Introduction}\label{sec:intro}

We introduce and study \emph{resource buying games} as a means to model
selfish behavior of players jointly designing a resource infrastructure.
In a resource buying game, we are given a finite set $N$ of players and a finite set of resources $E$.
We do not specify the type of the resources, they can be just anything (e.g.,
edges or nodes in a digraph, processors, trucks, etc.).
In our model, the players jointly buy a subset of the resources.
Each player $i\in N$ has a predefined family of subsets  (called  \emph{configurations})
$\cS_i\subseteq 2^E$
from which player $i$ needs at least one set $S_i\in \cS_i$ to be available.
For example, the families $\cS_i$ could be the
collection of all paths linking two player-specific terminal-nodes $s_i,t_i$ in a digraph $G=(V,E)$, or 
$\cS_i$ could stand for the set of machines on which $i$  can process her job on.
The cost $c_e$ of a resource $e\in E$ depends on the number of players using $e$,
and needs to be paid completely by the players before it becomes available.
As usual, we assume that the cost functions $c_e$ are non-decreasing and normalized in the sense that
$c_e$ never decreases with increasing load, and that $c_e$ is zero if none of the players is using $e$.
In a weighted variant of resource buying games, each player has a specific weight (demand) $d_i$,
and the cost $c_e$ depends on the sum of demands of players using $e$.
In  resource buying games, a strategy of player $i$ can be regarded as a tuple $(S_i, p_i)$ consisting of one of $i$'s  desired sets $S_i\in \cS_i$,
together with a payment vector $p_i\in \R_+^E$ indicating how much $i$ is willing to contribute towards the purchase of the resources.
The goal of each player is to pay as little as possible  by ensuring that the bought resources  contain at least one of her desired configurations.
A \emph{pure strategy Nash equilibrium} (PNE, for short)
is a strategy profile $\{(S_i, p_i)\}_{i\in N}$ such that none of the players has an incentive to switch her strategy given that
the remaining players stick to the chosen strategy.
A formal definition of the model will be given in Section \ref{subsec:prel}.

\subsubsection*{Previous Work.}
As the first seminal paper in the area of resource buying games, 
Anshelevich et al.~\cite{AnshelevichDTW08} introduced
\emph{connection games} to model selfish behavior of players
jointly designing a network infrastructure.
In their model, one is given an undirected graph $G=(V,E)$
with non-negative (fixed) edge costs $c_e, e\in E,$ and the players jointly design the network infrastructure by 
buying a subgraph $H\subseteq G$. An edge $e$ of $E$ is bought
if the payments of the players for this edge cover the cost $c_e$, and,
a subgraph $H$ is bought if every $e\in H$ is bought. Each player $i\in N$ has a specified  source node $s_i\in V$ and terminal node
$t_i\in V$ that she wants to be connected in the
bought subgraph.
A strategy of a player is a payment vector indicating how much she contributes towards the purchase of each edge in $E$.  
Anshelevich et al.\
show that these games have a PNE if all
players connect to a common source. They also show that general connection games might fail to have a PNE
(see also Section~\ref{sec:insights} below). 
Several follow-up papers (cf.\cite{Anshelevich09,AnshelevichC09,AnshelevichK11,EpsteinFY09,CardinalH10,Hoefer09,Hoefer10}) study the 
existence and efficiency of pure Nash and strong equilibria in connection games and extensions of them. In contrast
to these works, our model is more general as we assume load-dependent
congestion costs and weighted players. Load-dependent cost functions play an important role in many real-world applications
as, in contrast to  fixed cost functions, they take into account the
intrinsic coupling between the quality or cost of the 
resources and the resulting demand for it. A prominent 
example of this coupling arises in the design of telecommunication networks,
where the installation cost 
depends on the installed bandwidth  which in turn
should match the demand for it.

Hoefer~\cite{Hoefer11} studied resource buying games for load-dependent non-increasing marginal cost functions generalizing fixed costs. He considers unweighted congestion games modeling cover and facility location problems.
Among other results regarding approximate PNEs and the price of anarchy/stability, he gives a polynomial time algorithm computing
a PNE for the special case, where every
player wants to cover a single element.

\subsubsection*{First Insights.}\label{sec:insights}
Before we describe our results and main ideas in detail, we give
two examples motivating our research agenda.
\begin{center}
\begin{tikzpicture}[inner sep=1mm,scale=0.8]
\tikzstyle{place}=[circle,thick,draw=black,fill=white,scale=0.75]
  \node[place] (r1) at (1, 0) {1};
  \path (r1) ++(1.3, 0) node[place] (v1) {2}
          edge[normalEdge] (v1)
     ++(1.3, 0) node[place] (v2) {3}
       ++(0, -2.5) node[rectangle, draw, thick] (v3) {$e$}
       edge[normalEdge] (r1)
       edge[normalEdge] (v1)
         edge[normalEdge] (v2)
     ++(-2.5, 0) node[rectangle, draw, thick] (v4) {$f$} 
       edge[normalEdge] (r1)
       edge[normalEdge] (v1)
         edge[normalEdge] (v2);
      \draw (r1) ++(1.2, -3.5) node {Fig. 1(a)};
    \node[place] (r) at (7, 0) {$ s_1$};
  \path (r) ++(-1.3, -1.3) node[place] (v1) {$ s_2$}
       edge[normalEdge] node[above, sloped] {$1$} (r)
       edge[normalEdge] node[below, sloped] {$a$} (r) 
          ++(2.6, 0) node[place] (v3) {$ t_2$}
       edge[normalEdge] node[above, sloped] {$1$} (r)
        edge[normalEdge] node[below, sloped] {$b$} (r)
     ++(-1.3, -1.3) node[place] (v4) {$ t_1$} 
      edge[normalEdge] node[below, sloped] {$1$} (v1)
      edge[normalEdge] node[above, sloped] {$d$} (v1)
      edge[normalEdge] node[below, sloped] {$1$} (v3)
      edge[normalEdge] node[above, sloped] {$c$} (v3);;
  \draw (r) ++(0, -3.5) node {Fig. 1 (b)};
       \end{tikzpicture}
\end{center}
Consider the scheduling game  illustrated in Fig.1(a) with 
two resources (machines) $\{e, f\}$ and three players $\{1,2,3\}$ each having 
unit-sized jobs. Any job
fits on any machine, and the processing cost
of machines $e,f$  is given by
$c_j(\ell_j(S))$, where $\ell_j(S)$ denotes the number of jobs
on machine $j\in\{e,f\}$ under schedule $S$.
In our model, each player chooses a strategy which is a tuple consisting of one of the two machines,
together with a payment vector indicating how much she is willing to pay
for each of the machines.
Now, suppose the cost functions for the two machines
are  $c_e(0)=c_f(0)=0$, $c_e(1)=c_f(1)=1$, $c_e(2)=c_f(2)=1$ and $c_e(3)=c_f(3)=M$
for some large $M>0$. One can easily verify that there is no PNE:
If two players share the cost of one machine, then a player
with positive payments deviates to the other machine.
By the choice of $M$, the case that all players share a single machine can never be a PNE.
In light of this quite basic example, we have to restrict the set of feasible cost functions.
Although the cost functions $c_e$ and $c_f$ of the machines in this scheduling game  are monotonically non-decreasing, their marginal cost function 
is neither  non-increasing, nor non-decreasing, 
where we call
 cost function
$c_e:\N\to \R_+$   \emph{marginally non-increasing [non-decreasing]} if
\begin{equation}\label{eq.discrete-concave}
 c_e(x+\delta)-c_e(x)\ge ~[\le]~c_e(y+\delta)-c_e(y)\quad\forall x\le y;~ x,y,\delta\in \N.
\end{equation}
Note that cost functions with non-increasing marginal costs model economies of scale
and include fixed costs as a special case.
Now suppose that marginal cost functions are non-increasing and consider
scheduling games on restricted machines with uniform jobs. It is not hard to establish
a simple polynomial time algorithm to compute a PNE for 
this setting:
Sort the machines with respect to the costs evaluated at load one.
Iteratively, let the player whose minimal cost among her available resources is maximal
exclusively pay for that resource, drop this player from the list and update the cost
on the bought resource with respect to a unit increment of load.

While the above algorithm might give hope for obtaining a more general existence and computability
result for PNEs for non-increasing marginal cost functions, we recall
a counter-example given by~\cite{AnshelevichDTW08}.
Consider the connection game illustrated in Fig.1(b), where there are two players
that want to establish an $s_i$-$t_i$ path for $i=1,2$. 
Any strategy profile (\emph{state}) of the game  contains two paths, one for each player, that have exactly
one edge $e$ in common. In a PNE, no player would ever pay a positive amount
for an edge that is not on her chosen path. Now, 
a player paying a positive amount for $e$ (and at least one such player exists)
would have an incentive to switch strategies as she could use the edge that is exclusively used (and paid) by the
other player for free. 
 Note that this example uses fixed costs which are marginally non-increasing.

\subsubsection*{Our Results and Outline.}\label{sec:our-results}
We study unweighted and weighted resource buying games and 
investigate the existence and computability of pure-strategy
Nash equilibria (PNEs, for short).
In light of the examples illustrated in Fig.1,
we find that equilibrium existence is strongly related
to two key properties of the game: the monotonicity of the marginal cost functions and the combinatorial structure of the allowed strategy spaces of the players.

We first consider non-increasing marginal cost functions and
investigate the combinatorial structure of the strategy spaces
of the players for which PNEs exist.
As our main result we show that \emph{matroids} are exactly the right structure to consider in this setting:
In Section~\ref{sec:matroid-algo}, we present
a polynomial-time algorithm to compute a PNE for unweighted matroid resource buying games.
This algorithm can be regarded as a far reaching, but highly non-trivial extension of the simple algorithm for scheduling games described before:
starting with the collection of matroids,
our algorithm iteratively makes use of deletion and contraction operations
to minor the matroids, until a basis together with a suitable payment vector for each of the players  is found.
The algorithm works not only for fixed costs, but also for the more general marginally non-increasing  cost functions. Matroids have a rich combinatorial structure and include, for instance,
the setting where each player wants to build a spanning tree in a graph. 
In Section~\ref{sec:weighted-matroids}, we study weighted resource buying games.  
We prove that for non-increasing marginal costs and matroid structure, every (socially) optimal configuration profile
can be obtained as a PNE. The proof relies on a complete
characterization of configuration profiles that can appear as a PNE.
We lose, however, polynomial running time as
computing an optimal configuration profile is NP-hard even for simple matroid games
with uniform players. 
In Section~\ref{sec:non-matroids}, we show that our existence result is "tight"
by proving that the matroid property is also the maximal property of the configurations of the players that leads to the existence
of a PNE: For every two-player weighted resource buying game
having non-matroid set systems, we construct
an isomorphic game that does not admit a PNE.

We finally turn in Section~\ref{sec:discrete-convex} to resource buying games having non-decreasing marginal costs.
We show that every such game possesses a PNE
regardless of the strategy space. We prove this result by showing that
an optimal configuration profile can be obtained as a PNE.
We further show that one can compute a PNE efficiently
whenever one can compute a best response efficiently.
Thus, PNE can be efficiently computed even in 
multi-commodity network games.

\subsubsection*{Connection to Classical Congestion Games.}
We briefly discuss connections and differences between 
resource buying games and classical congestion games. 
Recall the congestion game model: the strategy space of each player $i\in N$
consists of a family $\cS_i\subseteq 2^E$ of a finite set of resources $E$.
The cost $c_e$ of each resource $e\in E$ depends on the number of players using $e$.
In a classical congestion game, each player $i$ chooses one set $S_i\in \cS_i$ and needs to pay the \emph{average} cost of every resource in $S_i$. Rosenthal~\cite{Rosenthal:1973}
proved that congestion games always have a PNE.
This stands in sharp contrast to resource buying games for which
PNE need not exist even for unweighted singleton two-player games with non-decreasing costs,
see Fig.1(a).
For congestion games with weighted players, Ackermann et al.\cite{Ackermann09} showed that for non-decreasing marginal cost functions matroids are
the maximal combinatorial structure of strategy spaces admitting PNE.
In contrast, Theorem~\ref{t.convex-costs} shows that resource buying games with non-decreasing marginal cost functions
always have a PNE \emph{regardless} of the strategy space.
Our characterization of matroids as the maximal
combinatorial structure admitting PNE for resource buying games with non-increasing marginal costs is also different to the one of Ackermann et al.~\cite{Ackermann09} for classical weighted matroid congestion games with non-decreasing marginal costs. Ackermann et al. prove the existence
of PNE by using a potential function approach. Our existence result relies
on a complete characterization of PNE implying that there exist payments so that the optimal profile becomes a PNE. 
For unweighted matroid congestion games, Ackermann et al.~\cite{Ackermann:2008}
prove polynomial convergence of best-response by using a (non-trivial) potential function
argument. Our algorithm and its proof of correctness
are completely different relying on matroid minors and cuts.

These structural differences between the two models become even more obvious in light
of the computational complexity of computing a PNE.
In classical network congestion games with non-decreasing marginal costs it is PLS-hard
to compute a PNE~\cite{Fabrikant04,Ackermann:2008} even for unweighted players.
For network games with weighted players and non-decreasing marginal costs, Dunkel and Schulz~\cite{DunkelS08} showed that it is NP-complete to decide whether a PNE exists.
In resource buying (network) games with non-decreasing marginal costs one can compute a PNE in polynomial time even with weighted players (Theorem~\ref{t.convex-costs-poly}).

\section{Preliminaries}\label{subsec:prel}
\paragraph{The Model.}
A tuple $\mathcal{M} = (N, E, \mathcal{S}, (d_i)_{i \in N}, (c_r)_{r \in E})$ is called a
\emph{congestion model}, where $N = \{1,\dots,n\}$ is the set of players, $E = \{1,\dots,m\}$ is the set of
\emph{resources}, and  $\mathcal{S} = \times_{i \in N} \mathcal{S}_i$ is a set of states (also called \emph{configuration profiles}).
For each player $i \in N$, the set
 $\mathcal{S}_i$ is a non-empty set of subsets
$S_i \subseteq E$, called \emph{the configurations of $i$}. If $d_i=1$ for all $i\in N$ we obtain an \emph{unweighted} game,
otherwise, we have a \emph{weighted} game.
We call a configuration profile $S \in \cS$  \emph{(socially) optimal} if
its total cost $c(S)=\sum_{e\in E} c_e(S)$ is minimal among all $S\in \cS$.

Given a state  $S\in\mathcal{S}$, we define $\ell_e(S)=\sum_{i\in N:
e \in S_i} d_i$ as the total load of $e$ in $S$. Every
resource $e \in E$ has a \emph{cost function} $c_e : \cS \rightarrow \N
$ defined as $c_e(S)=c_e(\ell_e(S))$. In this paper, all cost
functions are non-negative, non-decreasing and normalized in the sense that $c_e(0)=0$.  
We now obtain a \emph{weighted resource buying game} as the (infinite) strategic game $G =
(N,\mathcal{S}\times\mathcal{P},\pi)$, where $\mathcal{P}= \times_{i \in N}
\cP_i$ with $\cP_i=\R_+^{|E|}$ is the set of feasible payments for the players. 
Intuitively, each player 
chooses a configuration $S_i\in\mathcal{S}_i$ and a payment 
vector $p_i$ for the resources.
We say that a resource $e\in E$ is \emph{bought} under
strategy profile $(S,p)$, if $\sum_{i\in N} p_i^e \geq c_e(\ell_e(S))$,
where $p_i^e$ denotes the payment of player $i$ for resource $e$.
Similarly, we say that a subset $T\subseteq E$ is bought if every $e\in T$ is bought.
The private cost function of each player $i\in N$ is defined  as
$\pi_i(S)=\sum_{e\in E} p_i^e$ if $S_i$ is bought, and $\pi_i(S)=\infty$, otherwise.
We are interested in the existence of pure Nash equilibria, i.e., strategy profiles that are resilient against unilateral deviations. 
Formally, a strategy profile $(S,p)$ is a \emph{pure Nash equilibrium}, PNE for short, if $\pi_i(S,p) \leq \pi_i((S'_i, S_{-i}), (p_i', p_{-i}))$ for all players $i \in N$ and all strategies $(S_i,p_i) \in \cS_i \times \cP_i$.
Note that  for PNE, we may assume w.l.o.g that a pure strategy $(S_i, p_i)$ of player $i$  satisfies $p_i^e\geq 0$ for all $e\in S_i$ and $p_i^e= 0$, else.

\paragraph{Matroid Games.}
We call a weighted resource buying game 
a \emph{matroid (resource buying) game} if each configuration set $\cS_i\subseteq 2^{E_i}$ with $E_i\subseteq E$  forms the base set of some matroid $\cM_i=(E_i, \cS_i)$.
As it is usual in matroid theory, we will throughout write $\cB_i$ instead of $\cS_i$, and $\cB$ instead of $\cS$,
when considering  matroid games.
Recall that 
a non-empty anti-chain\footnote{Recall that $\cB_i\subseteq 2^{E_i}$
is an \emph{anti-chain} (w.r.t.\ $(2^{E_i}, \subseteq)$) if $B,B'\in \cB_i,~ B\subseteq B'$ implies $B=B'$.} $\cB_i\subseteq 2^{E_i}$ is the base set of a matroid $\cM_i=(E_i, \cB_i)$ on resource (\emph{ground}) set $E_i$
if and only if the following \emph{basis exchange property} is satisfied:
whenever $X,Y\in \cB_i$ and $x\in X\setminus{Y}$, then  
there exists some $y\in Y\setminus{X}$ such that $X\setminus{\{x\}}\cup \{y\}\in \cB_i.$
For more about matroid theory, the reader is referred to~\cite{Schrijver:2003}.

\section{An Algorithm for Unweighted Matroid Games
}\label{sec:matroid-algo}

Let $M=(N,E, \cB, (c_e)_{e\in E})$ be a model of an unweighted matroid resource buying game. Thus, 
$\cB=\times_{i\in N} \cB_i$ where each
$\cB_i$ is the base set of some matroid $\cM_i=(E_i, \cB_i)$, and $E=\bigcup_{i\in N} E_i$.
In this section, we assume that the cost functions $c_e,~ e\in E$ are marginally non-increasing. 

Given a matroid $\cM_i=(E_i, \cB_i)$, we 
denote by $\cI_i=\{I\subseteq E\mid I\subseteq B \mbox{ for some } B\in \cB_i\}$
the collection of \emph{independent sets} in $\cM_i$. Furthermore, we
call
a set $C\subseteq E_i$  a \emph{cut} of matroid $\cM_i$
if $E_i\setminus{C}$ does not contain a basis of $\cM_i$.
Let $\cC_i(\cM_i)$ denote the collection of all inclusion-wise minimal cuts of $\cM_i$.
We will need the following basic insight at several places.
\begin{lemma}\cite[Chapters 39 -- 42]{Schrijver:2003}\label{lem:one-to-one-exchange}
Let $\cM$ be a weighted matroid with weight function $w:E\rightarrow \R_+$.
A basis $B$ is a minimum weight basis of $\cM$ if and only
if there exists no basis $B^*$ with $|B\setminus B^*|=1$ and
$w(B^*)<w(B)$.
\end{lemma}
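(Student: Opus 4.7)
The forward direction of the equivalence is immediate from the definition of a minimum weight basis, so I will concentrate on the reverse direction, which I will establish in contrapositive form: assuming $B$ is not a minimum weight basis, I will exhibit a basis $B^*$ with $|B\setminus B^*|=1$ and $w(B^*)<w(B)$.

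The strategy is to pick a witness of minimum weight that lies as close to $B$ as possible. Let $B_0$ be a minimum weight basis of $\cM$ chosen to minimize $|B\setminus B_0|$. Since $w(B_0)<w(B)$, the two bases differ, so I can fix an element $x\in B\setminus B_0$. I then invoke the symmetric (strong) matroid exchange property applied to the pair $(B,B_0)$ at $x$: this supplies some $y\in B_0\setminus B$ such that $B-x+y\in\cB$ and $B_0-y+x\in\cB$ \emph{simultaneously}. This is the only non-routine ingredient; it is classical (see Schrijver, Chapters~39--42, already cited in the statement) and is strictly stronger than the weak basis exchange recalled in the preliminaries.

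I then compare weights in two cases. If $w(y)<w(x)$, then $B^*:=B-x+y$ is a basis with $|B\setminus B^*|=1$ and $w(B^*)<w(B)$, and we are done. Otherwise $w(y)\geq w(x)$, so $w(B_0-y+x)\leq w(B_0)$; by minimum-weightness of $B_0$ this must be an equality, and hence $B_0-y+x$ is another minimum weight basis. However, $|B\setminus (B_0-y+x)|=|B\setminus B_0|-1$, since $x\in B\setminus B_0$ has been inserted into $B_0$ while $y\notin B$ has been removed from it; this contradicts the minimality in the choice of $B_0$, ruling out the second case. The only real obstacle I anticipate is the reliance on symmetric exchange; if one wishes to avoid citing it, the same case analysis can be carried out using the fundamental circuit of $y$ in $B$ together with the fundamental cocircuit of $x$ in $B_0$, but symmetric exchange gives the cleanest route and matches the reference in the statement.
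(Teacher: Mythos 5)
The paper does not prove this lemma at all---it is imported verbatim from Schrijver with a chapter citation---so there is no in-paper argument to compare yours against; the only question is whether your proof is sound, and it is. The contrapositive setup (choose a minimum-weight basis $B_0$ minimizing $|B\setminus B_0|$, pick $x\in B\setminus B_0$, apply symmetric exchange to get $y\in B_0\setminus B$ with both $B-x+y$ and $B_0-y+x$ bases) is the standard route, and both cases check out: if $w(y)<w(x)$ you exhibit the improving single exchange directly, and if $w(y)\ge w(x)$ then $B_0-y+x$ is another minimum-weight basis with $|B\setminus(B_0-y+x)|=|B\setminus B_0|-1$, contradicting the extremal choice of $B_0$. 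Your one non-elementary ingredient, the symmetric basis exchange property, is indeed strictly stronger than the exchange axiom recalled in the paper's preliminaries, but it is classical and covered by the very reference the lemma cites, so invoking it is legitimate; your remark that it could be replaced by a fundamental circuit/cocircuit intersection argument is accurate but not needed. One small point worth making explicit if you write this up: $B\ne B_0$ implies $B\setminus B_0\ne\emptyset$ because all bases of a matroid are equicardinal---that is the fact licensing the choice of $x$.
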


In a strategy profile $(B,p)$ of our game with $B=(B_1, \ldots, B_n) \in \cB$
(and $n=|N|$) players will jointly buy a subset of
resources $\bar{B}\subseteq E$ with $\bar{B}=B_1\cup \ldots \cup B_n$.
Such a strategy profile $(B,p)$ is a PNE
if none of the players $i\in N$ would need to pay less
by switching to some other basis $B_i'\in \cB_i$,
given that all other players $j\neq i$ stick to their chosen strategy $(B_j, p_j)$.
By Lemma~\ref{lem:one-to-one-exchange}, it suffices to consider bases $\hat{B}_i\in \cB_i$
with $\hat{B}_i=B_i-g+f$ for some $g\in B_i\setminus{\hat{B}_i}$ and $f\in \hat{B}_i\setminus{B_i}$.
Note that by switching from $B_i$ to $\hat{B}_i$,
player $i$ would need to pay the additional marginal cost $c_f(l_f(B)+1)-c_f(l_f(B))$,
but would not need to pay for element $g$.
Thus, $(B,p)$ is a PNE iff for all $i\in N$ and all $\hat{B}_i\in \cB_i$
with $\hat{B}_i=B_i-g+f$ for some $g\in B_i\setminus{\hat{B}_i}$ and $f\in \hat{B}_i\setminus{B_i}$ holds
$p_i^g \le c_f(l_f(B)+1)-c_f(l_f(B)).$

We now give a polynomial time algorithm (see Algorithm~\ref{alg:matroid} below) computing a PNE
for unweighted matroid games with marginally non-increasing costs.
The idea of the algorithm can roughly be described as follows: 
In each iteration,  for each player $i\in N$, the algorithm maintains some independent set $B_i\in \cI_i$,
starting with $B_i=\emptyset$, as well as some payment vector $p_i\in \R^E_+$, starting with
the all-zero vector. It also maintains a current matroid $\cM_i'=(E_i', \cB_i')$ that is obtained
from the original matroid $\cM_i=(E_i, \cB_i)$ by deletion and contraction operations (see e.g., \cite{Schrijver:2003} for the
definition of deletion and contraction in matroids.)
The algorithm also keeps track of the current marginal cost 
$c_e'=c_e(\ell_e(B)+1)-c_e(\ell_e(B))$ for each element $e\in E$ and the current sequence $B=(B_1, \ldots, B_n)$.
Note that $c'_e$ denotes the amount 
that needs to be paid if some additional player $i$ selects $e$
into its set $B_i$.
In each iteration, while there exists at least one player $i$ such that $B_i$ is not already a basis,
the algorithm chooses among all cuts in  
$\cC=\{C\in \cC_i(\cM_i') \mid \mbox{ for some }i\in N\}$ an inclusion-wise minimal cut $C^*$
whose bottleneck element (i.e., the element of minimal current weight in $C^*$) 
has maximal $c'$-weight (step~\ref{Step:Choose-cut}).
(We assume that some fixed total order $(E, \preceq)$
is given to break ties, so that the choices of $C^*$ and $e^*$ are unique.)
It then selects the bottleneck element $e^*\in C^*$  (step~\ref{Step:select-element}),
and some player $i^*$ with $C^*\in \cC_i(\cM_i')$ (step~\ref{Step:select-player}).
In an update step, the algorithm  lets player $i^*$ pay the
 marginal cost $c_{e^*}'$ (step~\ref{Step:Update-marginal-cost}), adds $e^*$ to $B_{i^*}$ (step~\ref{Step:Update-Basis}), 
and contracts
element $e^*$ in matroid $\cM'_{i^*}$ (step~\ref{Step:Update-Base-set}). 
If $B_{i^*}$ is a basis in the original matroid $\cM_{i^*}$, the algorithm drops $i^*$ from the player set $N$ (step~\ref{Step:Drop-Player}).
Finally, the algorithm deletes the elements in $C^*\setminus{\{e^*}\}$ in all matroids
$\cM_i'$ for $i\in N$ (step~\ref{Step:Delete-cut-from-all-matroids}), and iterates until
$N=\emptyset$, i.e., until a basis has been found for all players. 

\begin{algorithm}[h!]
  \caption{\textsc{Computing PNE in Matroids}}
  \label{alg:matroid}
\begin{algorithmic}[1]
\REQUIRE $(N,E, \cM_i=(E_i, \cB_i), c)$
\ENSURE PNE $(B,p)$
\STATE Initialize $\cB_i'=\cB_i, E_i'=E_i$, $B_i=\emptyset$, 
$p_i^e=0$, $t_e=1$, and $c_e'=c_e(1)$ 
for each $i\in N$ and each $e\in E$;
\WHILE{ $N \neq \emptyset$}
\STATE \label{Step:Choose-cut} choose  $C^*\leftarrow \mbox{argmax}\{\min\{ c'_e : e\in C\}\mid C\in\cC\text{ inclusion-wise minimal}\}$\\ where
$\cC=\{C\in \cC_i(\cM_i') \mid \mbox{ for some player }i\in N\}$; 
\STATE \label{Step:select-element} choose $e^*\leftarrow \mbox{argmin}\{c'_e\mid e\in C^*\}$; 
\STATE \label{Step:select-player} choose $i^*$ with $C^*\in \cC_{i^*}(\cM'_{i^*})$; 
\STATE  $p^{e^*}_{i^*}\leftarrow c'_{e^*}$;  
\STATE \label{Step:Update-marginal-cost} $c'_{e^*}\leftarrow c_{e^*}(t_{e^*}+1) - c_{e^*}(t_{e^*})$; 
\STATE \label{Step:Update-Basis}$B_{i^*}\leftarrow B_{i^*}+e^*$; 
\IF {$B_{i^*}\in \cB_{i^*}$}
\STATE \label{Step:Drop-Player} $N\leftarrow N-i^*$;
\ENDIF
\STATE \label{Step:Update-Base-set} $\cB'_{i^*} \leftarrow \cB'_{i^*}/e^* = 
\{B\subseteq E'_{i^*}\setminus{\{e^*\}}\mid B+e^*\in \cB'_{i^*}\}$; 
\STATE \label{Step:Update-Ground-set} $E'_{i^*}\leftarrow E'_{i^*}\setminus{\{e^*\}}$; 
\STATE $t_{e^*}\leftarrow t_{e^*}+1$;
\FORALL{players $i\in N$}
\STATE  \label{Step:Delete-cut-from-all-matroids}
 $\cB'_i\leftarrow \cB'_i\setminus{(C^*\setminus{\{e^*\}})}=\{B\subseteq E'_i\setminus{(C^*\setminus{\{e^*\}})} 
\mid B\in \cB'_i \}$ 
\STATE \label{Step:Update-groundset} 
$E'_i\leftarrow E'_i\setminus{(C^*\setminus{\{e^*\}})}$;
\ENDFOR
\ENDWHILE
\STATE $B=(B_1, \ldots, B_n)$, $p=(p_1, \ldots, p_n)$;
\STATE {\bfseries{Return} $(B,p)$}
\end{algorithmic}
\end{algorithm}

\medskip

Obviously, the algorithm terminates after at most $|N|\cdot|E|$ iterations,
since in each iteration, at least one element $e^*$ is dropped from the ground set of
one of the players.
Note that the inclusion-wise minimal cut $C^*$ whose bottleneck element
$e^*$ has maximal weight (step~\ref{Step:Choose-cut}), as well as the corresponding player $i^*$ and
the bottleneck element $e^*$, can be efficiently found, see the appendix for
a corresponding subroutine.

\medskip
It is not hard to see that  Algorithm \ref{alg:matroid} corresponds exactly to the procedure described in Section~\ref{sec:insights}
to solve the scheduling game (i.e., the matroid game on uniform matroids) with  non-increasing marginal cost functions.
We show 
that the algorithm returns a pure Nash equilibrium also for general matroids.
As a key Lemma, we show that the current weight of the chosen bottleneck element
monotonically decreases.

\medskip
\begin{theorem}
The output $(B,p)$ of the algorithm  is a PNE.

\end{theorem}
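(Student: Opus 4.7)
The plan is to use the PNE characterization stated just before the algorithm: $(B,p)$ is a PNE iff for every player $i$ and every basis $\hat B_i = B_i - g + f \in \cB_i$, $p_i^g \le c_f(\ell_f(B)+1)-c_f(\ell_f(B))$. I first establish two algorithmic invariants: (a) each $\cM'_i$ stays a well-defined matroid minor of $\cM_i$ of rank $\mathrm{rank}(\cM_i)-|B_i|$, so each $B_i$ is independent in $\cM_i$ and becomes a basis upon termination; the only non-routine piece is that deleting $C^*\setminus\{e^*\}$ preserves the rank of each $\cM'_j$, which follows because any cocircuit of $\cM'_j$ contained in $C^*\setminus\{e^*\}$ would have bottleneck strictly exceeding $w_k=c'_{e^*}$ (via the tie-breaking on $(E,\preceq)$), contradicting the maximum-bottleneck choice of $C^*$; and (b) the payments on each resource $e$ telescope to $c_e(\ell_e(B))$, so every $B_i$ is bought.

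The central technical lemma I would prove is that the sequence of selected bottleneck weights $w_k:=c'_{e^*_k}$ is non-increasing in $k$. Between consecutive iterations only $c'_{e^*_k}$ changes, and it weakly decreases because $c_{e^*_k}$ is marginally non-increasing. Combined with the max-bottleneck rule and an analysis of how minimal cuts transform under the contraction of $e^*$ in $\cM'_{i^*}$ and the deletion of $C^*\setminus\{e^*\}$ in every $\cM'_j$, this yields $w_{k+1}\le w_k$.

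With these in hand I would verify the PNE condition. Fix a swap $B_i-g+f\in\cB_i$, and let $k_g$ be the iteration at which $g$ is assigned to $i$, so $p_i^g=w_{k_g}$. If $f$ was deleted from $\cM'_i$ at some earlier iteration $k'<k_g$, then $f\in C^*_{k'}\setminus\{e^*_{k'}\}$ forces $c'_f>w_{k'}\ge w_{k_g}$ at $k'$ by the tie-breaking; were $f$ selected at a later iteration $k''$, its selection weight $w_{k''}$ would equal the unchanged $c'_f$ and thus exceed $w_{k'}\ge w_{k''}$, a contradiction, so $f$ is never selected thereafter and $c_f(\ell_f(B)+1)-c_f(\ell_f(B))$ equals $c'_f$ at iteration $k'$, which is $>w_{k_g}=p_i^g$. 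Otherwise $f\in E'_i$ at iteration $k_g$; letting $B_i^{\mathrm{fut}}:=B_i\setminus B_i^{(k_g)}$, the invariants give that $B_i^{\mathrm{fut}}$ is a basis of $\cM'_i$ containing $g$ and that $B_i^{\mathrm{fut}}-g+f$ is also a basis, so $f$ lies in the fundamental cocircuit $D$ of $g$ w.r.t.\ $B_i^{\mathrm{fut}}$ in $\cM'_i$. The max-bottleneck choice and tie-breaking force every element of $D\setminus\{g\}$ to have $c'$-weight strictly greater than $w_{k_g}$, giving $c'_f>w_{k_g}$ at iteration $k_g$; the same monotonicity argument rules out any later selection of $f$, whence $c_f(\ell_f(B)+1)-c_f(\ell_f(B))$ equals $c'_f$ at $k_g$, which exceeds $w_{k_g}=p_i^g$.

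The main obstacle is the cocircuit-theoretic step in this last paragraph: the max-bottleneck rule, the tie-breaking on $(E,\preceq)$, and the behaviour of cocircuits under matroid minor operations must be orchestrated carefully to produce the strict bottleneck gap in $D$ that drives the whole argument.
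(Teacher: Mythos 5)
Your overall architecture matches the paper's: termination via the minor invariants, telescoping payments, the monotonicity of the selected bottleneck weights (this is exactly Lemma~\ref{ob.bottleneck-weight-decreases} in the paper), and a case split on whether $f$ was already removed from player $i$'s ground set when $g$ is paid. Your Case A (where $f$ was deleted at some $k'<k_g$) is sound and is essentially the paper's second case, stated more directly. The problem is the crux of your Case B, which you yourself flag as the main obstacle: the claim that the max-bottleneck rule and the tie-breaking force every element of the fundamental cocircuit $D$ of $g$ (w.r.t.\ $B_i^{\mathrm{fut}}$ in $\cM'_i$) other than $g$ to have $c'$-weight exceeding $w_{k_g}$. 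The max-bottleneck rule only gives you $\min_{e\in D}c'_e\le\min_{e\in C^*}c'_e=c'_g$, and since $g\in D$ this inequality is automatic and says nothing about the other elements of $D$; a cut $D$ containing $g$ together with a cheaper element is perfectly consistent with the algorithm having preferred $C^*$. So the justification you offer for the decisive inequality $c'_f>w_{k_g}$ does not go through.

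The missing ingredient is the deletion in step~\ref{Step:Delete-cut-from-all-matroids}: at iteration $k_g$ all of $C^*\setminus\{g\}$ is removed from $E'_i$, so every element of $B_i^{\mathrm{fut}}\setminus\{g\}$ (selected later) avoids $C^*$, i.e.\ $B_i^{\mathrm{fut}}\cap C^*=\{g\}$. Since $C^*$ is a cut of $\cM'_i$ and $B_i^{\mathrm{fut}}-g+f$ is a basis of $\cM'_i$, that basis must meet $C^*$, which forces $f\in C^*$ (indeed $D\subseteq C^*$, hence $D=C^*$ by minimality of cocircuits). Only then does the fact that $g=e^*$ is the \emph{minimum} of $C^*$ yield $c'_f\ge c'_g=w_{k_g}$ --- note a weak inequality, which is all the PNE condition $p_i^g\le c_f(\ell_f(B)+1)-c_f(\ell_f(B))$ requires; your strict inequalities are not available once ties are broken only by $\preceq$. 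This is precisely how the paper argues (it shows $f\notin C^*$ with $f\in E_i^{(k)}$ would contradict $C^*$ being a cut, and handles $f\in C^*$ by the bottleneck property), and the same membership $f\in C^*$ is what guarantees $f$ is deleted from all remaining players at iteration $k_g$, so that its marginal cost cannot drop afterwards. Without establishing $f\in C^*$, both the lower bound on $c'_f$ and your argument that $f$ is never selected later are unsupported.
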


\begin{proof}
Obviously, at termination, each set $B_i$ is a basis of matroid $\cM_i$, as otherwise, player $i$ 
would not have been dropped from $N$,
in contradiction to the stopping criterium $N=\emptyset$.
Thus, we first need to convince ourselves that the algorithm terminates, i.e.,
constructs a basis $B_i$ for each matroid $\cM_i$. However,
this follows by the definition of contraction and deletion in matroids:

To see this, we denote by $N^{(k)}$ the current
player set, and by $B_i^{(k)}$ and $\cM_i^{(k)}=(E_i^{(k)}, \cB_i^{(k)})$ the current independent set 
and matroid of player $i$ at the beginning of iteration $k$.
Suppose that the algorithm now chooses $e^*$ in step~\ref{Step:select-element} and player $i^*$
in step~\ref{Step:select-player}. Thus, it updates $B_{i^*}^{(k+1)}\leftarrow B_{i^*}^{(k)}+e^*$ in step~\ref{Step:Update-Basis}
and considers the base set $\cB^{(k)}_{i^*}/e^*$ of the contracted matroid 
$\cM_{i^*}^{(k)}/e^*$.
Note that for each $B\in \cB^{(k)}_{i^*}/e^*$, the set $B+e^*$ is a basis in $\cB_{i^*}^{(k)}$, and, by induction, $B+B_{i^*}^{(k+1)}$ is a basis
in the original matroid $\cM_{i^*}$.
Thus, $B_{i^*}^{(k+1)}$ is a basis in $\cM_{i^*}$ (and $i^*$ is dropped from $N^{(k)}$) if and only if $\cB^{(k)}_{i^*}/e^*=\{\emptyset\}$.

Now consider any other player $i\neq i^*$ with $\cB_i^{(k)}\neq\{\emptyset\}$ (and thus $i\in N^{(k)}$).
Then, for the new base set $\cB_{i}^{(k+1)}=\cB^{(k)}_{i}\setminus{(C^*\setminus{\{e^*\}})}$ we still have $\cB_{i}^{(k+1)}\neq\{\emptyset\}$,
since otherwise $C^*\setminus{\{e^*\}}$ is a cut in matroid $\cM_i^{(k)}$, in contradiction to the
choice of $C^*$.
Thus, since the algorithm only terminates when $N^{(k)}=\emptyset$ for the current iteration $k$, it terminates
with a basis $B_i$ for each player $i$.

Note that throughout the algorithm 
it is guaranteed that the current payment vectors $p=(p_1, \ldots, p_n)$ satisfy
$\sum_{i\in N}p_i^e= c_e(\ell_e(B))$ for each 
$e\in E$ and the current independent sets $B=(B_1, \ldots, B_n)$.
This follows, since the payments are only modified in step~\ref{Step:Update-marginal-cost},
where
the marginal payment $p_{i^*}^{e^*}=c_{e^*}(\ell_{e^*}(B)+1)-c_{e^*}(\ell_{e^*}(B))$
is  assigned just before $e^*$ was selected into the set $B_{i^*}$.
Since we assumed the $c_e$'s to be non-decreasing, this also guarantees 
that each component $p_i^e$ is non-negative, and positive only if $e\in B_i$.

It remains to show that the final output $(B,p)$ is a PNE.
Suppose, for the sake of contradiction, that this were not true, i.e., that there exists some
$i\in N$ and some basis $\hat{B}_i\in \cB_i$ with $\hat{B}_i=B_i-g+f$
for some $g\in B_i\setminus{\hat{B}_i}$ and $f\in \hat{B}_i\setminus{B_i}$ such that 
$p_i^g > c_f(l_f(B+1))-c_f(l_f(B)).$
Let $k$ be the iteration in which the algorithm selects the element $g$ 
to be paid by player $i$, i.e., the algorithm updates $B_i^{(k+1)}\leftarrow B_i^{(k)}+g$.
Let $C^*=C(k)$ be the cut for matroid $\cM^{(k)}_i=(E_i^{(k)}, \cB_i^{(k)})$ chosen in this iteration.
Thus,  the set $E_i^{(k)}\setminus{C^*}$ contains no basis in $\cB_i^{(k)}$, i.e., no set $B\subseteq E_i^{(k)}\setminus{C^*}$
with $B+B_i^{(k)}\in \cB_i$. Note that the final set $B_i$ contains no element from $C^*$ other than $g$, as all elements in $C^*\setminus{\{g\}}$
are deleted from matroid $\cM_i^{(k)}/g$.
We distinguish the two cases where $f\in C^*$, and where $f\not\in C^*$.

In the first case, if $f\in C^*$, then,
since the algorithm chooses $g$ of minimal current marginal weight,
we know that $p_i^g=c_g(l_g(B^{(k)}+1)-c_g(l_g(B^{(k)})) \le c_f(l_f(B^{(k)}+1)-c_f(l_f(B^{(k)}))$. Thus, the marginal cost of $f$ must decrease
at some later point in time, i.e., $c_f(l_f(B+1))-c_f(l_f(B))<c_f(l_f(B^{(k)}+1)-c_f(l_f(B^{(k)}))$.
But this cannot happen, since $f$ is deleted from all matroids for which the algorithm has not found a basis up to iteration $k$.

However, also the latter case cannot be true: 
Suppose $f\not\in C^*$. If $f\in E_i^{(k)}$, then $\hat{B}_i\setminus{B_i^{(k)}}\subseteq E_i^{(k)}\setminus{C^*}$,
but $\hat{B}_i=\hat{B}_i\setminus{B_i^{(k)}}+B_i^{(k)}\in \cB_i$, in contradiction to $C^*$ being a cut in $\cM_i^{(k)}$.
Thus, $f$ must have been dropped from $E_i$ in some iteration $l$ prior  to $k$ by either some deletion or contraction operation.
We show that this is impossible (which finishes the proof):
A contraction operation of type $\cM_i^{(l)}\to \cM_i^{(l)}/e_l$ drops only the contracted element $e_l$ from player $i$'s ground set $E_i^{(l)}$,
after $e_l$ has been added to the current set $B_i^{(l)}\subseteq B_i$. 
Thus, since $f\not\in B_i$, $f$ must have been dropped
by the deletion operation in iteration $l$. Let $C(l)$ be the chosen cut in iteration $l$, and $e_l$ the bottleneck element.
Thus, $f\in C(l)-e_l$. 
Now, consider again the cut $C^*=C(k)$ of player $i$ which was chosen in iteration $k$.
Recall that the bottleneck element of $C(k)$ in iteration $k$ was $g$.
Note that there exists some cut $C'\supseteq C(k)$ such that $C'$
is a cut of player $i$ in iteration $l$ 
and $C(k)$ was obtained from $C'$
by the deletion and contraction operations in between iterations $l$ and $k$.
Why did the algorithm choose $C(l)$ instead of $C'$? The only possible answer is, that  the bottleneck element $a$ of $C'$
has current weight $c^{(l)}_a \le c^{(l)}_{e_l}\le c^{(l)}_f$. 
On the other hand, if $f$ was dropped in iteration $l$, then $c^{(l)}_f=c_f(l_f(B+1))-c_f(l_f(B))$.
Thus, by our assumption, $c^{(l)}_f<p_i^g=c^{(k)}_g$. However, since the cost function $c_g$ is
the marginally non-increasing,
it follows that $c^{(k)}_g \le c^{(l)}_g$. Summarizing, we yield
$c^{(l)}_a \le c^{(l)}_{e_l}\le c^{(l)}_f < c^{(k)}_g \le c^{(l)}_g$, and, in particular, $c^{(l)}_{e_l}< c^{(k)}_g$, 
in contradiction to Lemma~\ref{ob.bottleneck-weight-decreases} 
below (proven in the appendix).
\end{proof}

\begin{lemma}\label{ob.bottleneck-weight-decreases}
Let $\hat{c}_k$ denote the current weight of the bottleneck element chosen in step~\ref{Step:select-element} of iteration $k$.
Then this weight monotonically decreases, i.e., $l<k$ implies $\hat{c}_l \ge \hat{c}_k$ for all $l,k \in \mathbb{N}$.
\end{lemma}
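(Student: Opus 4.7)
The plan is to prove the lemma by contradiction: assume some iteration $k$ satisfies $\hat c_{k+1}>\hat c_k$, and exhibit a cut in some $\cM_j^{(k)}$ whose bottleneck strictly exceeds $\hat c_k$. This contradicts the choice in step~\ref{Step:Choose-cut} of $C^{(k)}$ as the inclusion-wise minimal cut maximizing the bottleneck weight in iteration $k$.

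First, I record the weight monotonicity: between iterations $k$ and $k{+}1$ only $c'_{e^{(k)}}$ is modified, and by the marginally non-increasing property it cannot increase. Hence every element $e\ne e^{(k)}$ satisfies $c'^{(k+1)}_e=c'^{(k)}_e$, while $c'^{(k+1)}_{e^{(k)}}\le\hat c_k$. Let $F:=C^{(k)}\setminus\{e^{(k)}\}$ and let $C^{(k+1)}\subseteq E_j^{(k+1)}$ be the minimal cut picked in iteration $k{+}1$ from some player $j$'s matroid. Since every element of $C^{(k+1)}$ has weight $\ge\hat c_{k+1}>\hat c_k$ in iteration $k{+}1$, neither $e^{(k)}$ (whose updated weight is at most $\hat c_k$, or which has been contracted away when $j=i^{(k)}$) nor any element of $F$ (deleted in step~\ref{Step:Delete-cut-from-all-matroids}) can lie in $C^{(k+1)}$. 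Thus the iteration-$k$ weight of every element of $C^{(k+1)}$ equals its iteration-$(k{+}1)$ weight and exceeds $\hat c_k$.

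Next, I lift $C^{(k+1)}$ to a cut of $\cM_j^{(k)}$: using the standard correspondence between cuts of the minor $\cM/e\setminus F$ and cuts of $\cM$, one checks that $\hat C:=C^{(k+1)}\cup F$ is a cut in $\cM_j^{(k)}$ (for $j\ne i^{(k)}$ this is immediate from $\cM_j^{(k+1)}=\cM_j^{(k)}\setminus F$; for $j=i^{(k)}$ one additionally uses the contraction at $e^{(k)}$). Take any inclusion-wise minimal sub-cut $\tilde C\subseteq\hat C$ in $\cM_j^{(k)}$. Since the algorithm maintains $\cB_j^{(k+1)}\ne\emptyset$ (see the termination argument of the main theorem), $F$ is not itself a cut of $\cM_j^{(k)}$, so $\tilde C$ must contain at least one element of $C^{(k+1)}$, whose iteration-$k$ weight is $>\hat c_k$. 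Elements of $\tilde C\cap F$ have iteration-$k$ weight $\ge\hat c_k$, with equality only for elements tied with $e^{(k)}$.

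If $e^{(k)}$ is the strict bottleneck of $C^{(k)}$, every element of $F$ has weight $>\hat c_k$ and $\tilde C$ itself has bottleneck $>\hat c_k$, yielding the desired contradiction. Otherwise $F$ contains elements tied with $e^{(k)}$, some of which may lie in $\tilde C$. To eliminate them I appeal to the cocircuit-exchange axiom of $\cM_j^{(k)}$: starting from $\tilde C$ and iteratively exchanging a tied element $e\in\tilde C\cap F$ with another cocircuit of $\cM_j^{(k)}$ that contains $e$ but avoids the currently-tied set (such a cocircuit exists by the minimality of $C^{(k)}$ together with the total order $(E,\preceq)$), I shrink the tied intersection at each step while retaining at least one element of $C^{(k+1)}$. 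The final cocircuit is a minimal cut in $\cM_j^{(k)}$ whose bottleneck strictly exceeds $\hat c_k$, once again contradicting the maximality of $\hat c_k$. The chief technical obstacle is ensuring that the exchange neither reintroduces $e^{(k)}$ itself (which also has weight $\hat c_k$ and belongs to $\cM_j^{(k)}$ when $j=i^{(k)}$) nor removes the last $C^{(k+1)}$-element from the cocircuit; this bookkeeping is precisely where the tie-breaking order $\preceq$ and the inclusion-wise minimality of $C^{(k)}$ are needed.
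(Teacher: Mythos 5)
The first two-thirds of your argument is sound and runs parallel to the paper's: you trace the cut chosen in iteration $k{+}1$ back into iteration $k$ and try to contradict the maximality in step~\ref{Step:Choose-cut}. Your observations that only $c'_{e^{(k)}}$ changes (and only downwards), that $C^{(k+1)}$ is disjoint from $\{e^{(k)}\}\cup F$, and that $C^{(k+1)}\cup F$ lifts to a cut of $\cM_j^{(k)}$ containing at least one element of $C^{(k+1)}$ are all correct. The problem is that this machinery only delivers a minimal cut $\tilde C$ whose bottleneck weight is $\ge \hat c_k$, with equality whenever $\tilde C$ retains an element of $F$ tied with $e^{(k)}$ --- and equality contradicts nothing. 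Everything therefore hinges on the tied case, which is exactly the step you leave open. This case is not a removable technicality: consider two rank-one matroids with $\cB_1=\{\{e\},\{f\}\}$, $\cB_2=\{\{f\},\{h\}\}$ and weights $c_e(1)=c_f(1)=5$, $c_h(1)=100$. Both minimal cuts $\{e,f\}$ and $\{f,h\}$ have bottleneck weight $5$; if the tie-break in step~\ref{Step:Choose-cut} selects $\{e,f\}$ with $e^*=e$, then $f$ is deleted from player $2$'s matroid and iteration $2$ produces the cut $\{h\}$ with bottleneck $100>5$. So the lemma is actually \emph{false} under an unfavourable tie-break, and any correct proof must invoke the tie-breaking rule in an essential, precisely specified way at the very point your argument becomes a sketch.

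Your proposed repair --- iterated cocircuit exchange to purge tied elements from $\tilde C$ --- does not close this gap. First, the auxiliary cocircuit ``containing the tied element $e$ but avoiding the currently-tied set'' is asserted to exist ``by the minimality of $C^{(k)}$ together with the total order,'' but neither property produces such a cocircuit. Second, even granting its existence, strong cocircuit elimination only confines the new cocircuit to $(C_1\cup C_2)\setminus\{e\}$, and the elements entering from $C_2$ carry no weight guarantee, so the bottleneck of the resulting cut can drop below $\hat c_k$ (or reintroduce other tied elements); your shrinking loop thus has neither an invariant nor a termination argument. The paper sidesteps this surgery entirely: it takes the \emph{parent} inclusion-wise minimal cut $C'\supseteq C^{(k+1)}$ of iteration $k$ (every cocircuit of the minor arises as the trace of one of the original matroid), notes that its low-weight element $a$ cannot have been removed by deletion --- deleted elements all have weight $\ge\hat c_k$ and, in case of ties, come after $e^{(k)}$ in $\preceq$ --- so $a=e^{(k)}$ was removed by contraction, and then derives a contradiction from minimality of $C'$: there is a basis meeting $C'$ exactly in $a$, whence $C'\setminus\{a\}$ cannot be a cut of $\cM_i^{(k)}/a$, although the survival of $C^{(k+1)}$ forces it to be one. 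Replacing your exchange step with an argument of this form (and stating the tie-breaking rule it relies on) is what is needed to complete the proof.
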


\section{Weighted Matroid Games}\label{sec:weighted-matroids}
For proving the existence of PNE in \emph{weighted} matroid games with non-increasing marginal costs
our algorithm presented before does not work anymore.
We prove, however, that there exists  a PNE
in matroid games with non-increasing marginal costs even for weighted demands.
To obtain our existence result, we now
derive a complete characterization of configuration profiles $B\in \cB$ in
weighted matroid games $(N,E, \cB, d,c)$
that can be obtained as a PNE. 
For our characterization, we need a few definitions:
For $B\in\cB$, $e\in E$ and $i\in N_e(B):=\{i\in N\mid e\in B_i\}$ let
$\mbox{ex}_i^e:=\{f\in E-e\mid B_i-e+f\in \cB_i\}\subseteq E$
denote the set of all resources $f$ such that player $i$ could exchange the resources $e$ and $f$
to obtain an alternative basis $B_i-e+f\in \cB_i$. Note that ex$_i(e)$ might be empty,
and that, if ex$_i(e)$ is empty, the element $e$ lies in every basis of player $i$ (by the matroid basis exchange property).
Let $F:=\{e\in E\mid e \mbox{ lies in each basis of }i \mbox{ for some }i\in N\}$ denote the set of elements
that are ``fixed'' in the sense that they must lie in one of the players' chosen basis.
Furthermore, we define for all $e\in E-F$ and all 
$i\in N_e(B)$ and all $f\in \mbox{ex}_i(e)$
the value $\Delta_i(B;e\rightarrow f):=c_f(\ell_f(B_i+f-e,B_{-i}))-c_f(\ell_f(B))$ which is the marginal amount
that needs to be paid in order to buy resource $f$ if $i$ switches from $B_i$ to $B_i-e+f$.
Finally, let $\Delta^e_i(B)$ be the minimal value among all $\Delta_i(B;e \rightarrow f)$ with $f\in \mbox{ex}_i(e)$. 
The proof of the following characterization can be found in the appendix.

\begin{theorem} \label{lem:charac}
Consider a weighted matroid  resource buying game $(N,E,\cB,d,c)$. There
is a payment vector $p$ such that the strategy profile $(B,p)$ with $B\in\cB$ is a PNE if and only if 
\begin{equation}\label{eq:decharged}
c_e(B)\le \sum_{i\in N_e(B)} \Delta_i^e(B) \quad \mbox{ for all }e\in E\setminus F.
\end{equation}
\end{theorem}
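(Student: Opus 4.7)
The plan is to prove both directions of the equivalence via a single preliminary observation: given fixed other-player strategies $(B_{-i},p_{-i})$, player $i$'s best-response problem is a minimum-weight basis problem on $\cM_i$ under the modular weight function assigning $w_e = p_i^e$ to $e \in B_i$ and $w_e = c_e(\ell_e(B) + d_i) - c_e(\ell_e(B))$ to $e \notin B_i$. This holds because $i$ can drop any payment $p_i^e$ for an abandoned resource $e \in B_i \setminus B_i'$ and must cover precisely the marginal cost $c_f(\ell_f(B)+d_i) - c_f(\ell_f(B))$ for each newly used $f \in B_i' \setminus B_i$, these costs being independent across new resources since each only accrues $i$'s own demand $d_i$. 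Lemma~\ref{lem:one-to-one-exchange} then characterizes equilibrium: $(B,p)$ is a PNE iff, for every $i$, every $e \in B_i$ with $\mbox{ex}_i(e) \neq \emptyset$, and every $f \in \mbox{ex}_i(e)$, we have $p_i^e \le \Delta_i(B;e \to f)$, equivalently $p_i^e \le \Delta_i^e(B)$.

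For necessity, assume $(B,p)$ is a PNE, where w.l.o.g.\ $\sum_{i \in N_e(B)} p_i^e = c_e(B)$ for every bought $e$ (overpayment would let some contributor strictly reduce her private cost). For $e \in E \setminus F$, the matroid basis-exchange property applied to $B_i$ and any basis of $\cM_i$ avoiding $e$ shows that $\mbox{ex}_i(e) \neq \emptyset$ for every $i \in N_e(B)$; thus $p_i^e \le \Delta_i^e(B)$, and summing over $N_e(B)$ yields $c_e(B) \le \sum_{i \in N_e(B)} \Delta_i^e(B)$.

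For sufficiency, construct $p$ as follows. If $e \in F$, fix some $i^*(e)$ in whose every basis $e$ lies (so $e \in B_{i^*(e)}$ and $\mbox{ex}_{i^*(e)}(e) = \emptyset$) and set $p_{i^*(e)}^e = c_e(B)$, $p_j^e = 0$ otherwise; the constraint on $i^*(e)$ is vacuous and those on the remaining players trivial. If $e \in E \setminus F$ and $S_e := \sum_{i \in N_e(B)} \Delta_i^e(B) > 0$, set $p_i^e = (c_e(B)/S_e)\,\Delta_i^e(B)$ for $i \in N_e(B)$; by hypothesis $c_e(B)/S_e \le 1$, hence $p_i^e \le \Delta_i^e(B)$. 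If $S_e = 0$ then necessarily $c_e(B) = 0$, and we set $p_i^e = 0$. In all cases payments are nonnegative, supported on $B_i$, and pay exactly for each bought resource, while every one-to-one exchange constraint $p_i^e \le \Delta_i^e(B)$ holds; by the preliminary observation $(B,p)$ is a PNE. The main obstacle is the preliminary observation itself: one must carefully argue that the modular weight decomposition really captures player $i$'s deviation cost (so Lemma~\ref{lem:one-to-one-exchange} applies), thereby collapsing the infinite-dimensional joint deviation space $\cB_i \times \cP_i$ into a finite set of single-element swaps to be checked.
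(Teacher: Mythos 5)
Your proposal is correct and follows essentially the same route as the paper: the reduction of deviations to single basis exchanges via Lemma~\ref{lem:one-to-one-exchange} (which the paper sets up in Section~\ref{sec:matroid-algo} and uses here implicitly), necessity by summing the exchange constraints $p_i^e\le\Delta_i^e(B)$ over $N_e(B)$, and sufficiency by the same proportional split $p_i^e=\Delta_i^e(B)\,c_e(B)/\sum_{j\in N_e(B)}\Delta_j^e(B)$ with a designated payer for each $e\in F$. Your explicit justification of the modular-weight best-response reduction and of the degenerate case $S_e=0$ only makes precise what the paper leaves implicit.
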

Note that the above characterization holds
for arbitrary non-negative and non-decreasing cost
functions.
In particular, if property (\ref{eq:decharged}) were true, it follows from the constructive proof that
the payment vector $p$ can be efficiently computed.
The following Theorem \ref{t.weighted-matroids} states that matroid games with non-increasing marginal costs and weighted demands always possess a PNE.
We prove Theorem  \ref{t.weighted-matroids} (in the appendix) by showing  that any socially optimal configuration $B\in \cB$ satisfies (\ref{eq:decharged}).
\begin{theorem} \label{t.weighted-matroids}
Every weighted matroid resource buying game with 
marginally non-increasing cost functions possesses a PNE.
\end{theorem}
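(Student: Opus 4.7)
Plan: The goal is to show that every socially optimal configuration $B^* \in \cB$ satisfies \eqref{eq:decharged}; by Theorem~\ref{lem:charac}, this immediately yields a payment vector $p$ such that $(B^*,p)$ is a PNE. I fix a social optimum $B^*$ and an element $e \in E\setminus F$. The first step is to verify that $\mbox{ex}_i(e)\neq \emptyset$ for every $i \in N_e(B^*)$: since $e \notin F$, player $i$ possesses some basis $B'' \in \cB_i$ avoiding $e$, and applying the basis exchange axiom to $B_i^*$ and $B''$ (with $e \in B_i^* \setminus B''$) produces an element $y \in B'' \setminus B_i^*$ with $B_i^* - e + y \in \cB_i$.

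For each $i \in N_e(B^*)$, I select $f_i \in \mbox{ex}_i(e)$ achieving the minimum $\Delta_i^e(B^*) = c_{f_i}(\ell_{f_i}(B^*) + d_i) - c_{f_i}(\ell_{f_i}(B^*))$, and define $B' \in \cB$ by replacing $B_i^*$ with $B_i^* - e + f_i$ for every $i \in N_e(B^*)$, leaving the remaining players unchanged. Social optimality gives $c(B^*) \le c(B')$. Since the load on $e$ drops from $\ell_e(B^*)$ to $0$ (saving exactly $c_e(B^*)$), while the load on each resource $f$ increases by $D_f := \sum_{i \in N_e(B^*):\, f_i = f} d_i$, rearranging $c(B^*) \le c(B')$ gives
\[
c_e(B^*) \;\le\; \sum_{f} \bigl[c_f(\ell_f(B^*) + D_f) - c_f(\ell_f(B^*))\bigr].
\]

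The key step is to bound the right-hand side by $\sum_{i \in N_e(B^*)} \Delta_i^e(B^*)$. Listing the players $i_1,\dots,i_m$ contributing to a given $f$ in any order, I apply \eqref{eq.discrete-concave} to each incremental step: for $j \ge 2$ the increment $c_f(\ell_f(B^*) + d_{i_1} + \dots + d_{i_j}) - c_f(\ell_f(B^*) + d_{i_1} + \dots + d_{i_{j-1}})$ is dominated by $c_f(\ell_f(B^*) + d_{i_j}) - c_f(\ell_f(B^*))$ because marginals are non-increasing. Telescoping then yields
\[
c_f(\ell_f(B^*) + D_f) - c_f(\ell_f(B^*)) \;\le\; \sum_{i:\, f_i = f} \Delta_i^e(B^*),
\]
and summing over $f$ produces exactly \eqref{eq:decharged}.

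The main subtlety is the possibility that several players $i \in N_e(B^*)$ pick the same exchange element $f_i$: in that case the actual cost increase on the shared resource is strictly smaller than the sum of the individual $\Delta_i^e(B^*)$'s, but the marginally non-increasing property provides precisely the slack needed to make the optimality-based lower bound strong enough. All other ingredients---non-emptiness of $\mbox{ex}_i(e)$ for $e \notin F$ and feasibility of the swapped profile $B' \in \cB$---are immediate from the matroid structure.
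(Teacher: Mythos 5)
Your proposal is correct and follows essentially the same route as the paper: both show that a socially optimal profile $B^*$ satisfies \eqref{eq:decharged} by simultaneously (in the paper, sequentially, which amounts to the same telescoping) exchanging $e$ for the minimizing $f_i$ for every $i\in N_e(B^*)$, comparing costs via optimality, and using the marginally non-increasing property to dominate the incremental cost on a shared $f$ by the individual $\Delta_i^e(B^*)$'s, then invoking Theorem~\ref{lem:charac}. Your explicit verification that $\mathrm{ex}_i(e)\neq\emptyset$ via the basis exchange axiom is a minor added detail the paper only asserts.
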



Note that the above existence result does not 
imply an efficient algorithm for computing a PNE: It is straightforward to show that computing a socially optimal
configuration profile is \classNP-hard even for unit demands and singleton strategies.


\section{Non-Matroid Strategy Spaces}\label{sec:non-matroids}
In the previous section, we proved that for weighted matroid congestion games
with non-negative, non-decreasing, marginally non-increasing 
cost functions, 
there always exists a PNE.
In this section, we show that the matroid property of the configuration sets is also the 
maximal property needed
to guarantee the existence of a PNE for \emph{all} weighted resource buying games with 
 marginally non-increasing costs (assuming that
there is no a priori combinatorial structure how the
strategy spaces are interweaved). 
This result and its proof (in the appendix) is related to one of Ackermann et al.\ in \cite{Ackermann09} for the classical weighted matroid congestion games with average cost
sharing and marginally non-decreasing  cost functions. 
\begin{theorem}\label{t.ackermann}
For every non-matroid anti-chain $\cS$ on a set of resources $E$, there exists
a weighted two-player resource buying game $G=(\tilde{E},(\cS_1\times \cS_2)\times\cP,\pi)$ 
having marginally non-increasing  cost functions, whose strategy spaces $\cS_1$ and $\cS_2$ 
are both isomorphic to $\cS$, so that $G$ does not possess a PNE. 
\end{theorem}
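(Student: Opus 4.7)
The approach is to exploit the failure of the matroid basis exchange axiom to produce a cyclic improvement structure that rules out any PNE. Since $\cS$ is a non-matroid anti-chain, there exist $X, Y \in \cS$ and $x \in X \setminus Y$ such that $(X \setminus \{x\}) \cup \{y\} \notin \cS$ for every $y \in Y \setminus X$. I would first select such a witness $(X, Y, x)$ with $|X \setminus Y|$ (equivalently $|Y \setminus X|$) minimum; this minimality is important later because it rules out "accidental" alternative configurations in $\cS$ that could act as stable best responses.

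Next, I would construct the resource set $\tilde{E}$ as two isomorphic copies of $E$ in which a carefully chosen subset of resources is identified rather than duplicated. Concretely, the copies of $x$ and of every $y \in Y \setminus X$ would become a single shared physical resource in $\tilde{E}$, while all remaining resources would appear as two private copies, one per player. This makes $\cS_1$ and $\cS_2$ both isomorphic to $\cS$, yet forces the two players to share exactly the resources relevant to the failed exchange. I would then assign weights $d_1 > d_2$ and fixed costs on each resource (fixed costs are trivially marginally non-increasing): a large cost on $x$ so its user strongly prefers to cost-share it, moderate costs on every $y \in Y \setminus X$ making them tempting free-riding targets, and small, carefully calibrated values on the private copies so that only configurations that are isomorphic copies of $X$ or $Y$ can possibly be attractive.

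Finally, I would enumerate the few candidate PNE configurations $(B_1, B_2) \in \cS_1 \times \cS_2$ and exhibit an improving deviation for each. Using Theorem \ref{lem:charac}, the payment constraints reduce to checking a short list of exchange inequalities. If, say, player 1 selects (the copy of) $X$ then she would prefer to swap $x$ for some $y \in Y \setminus X$ already bought by player 2, but that swap is blocked by the non-matroid property, so the only admissible deviation in $\cS_1$ is to jump all the way to $Y$, whose extra costs (calibrated in the previous step) exceed the savings; by symmetry between the two players, the same argument applies if she chose $Y$, and the profile loops back. The design can be made symmetric enough that every candidate profile admits such an improving move, so no PNE exists.

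The main obstacle will be the calibration step: the fixed costs and weights must be chosen so that for every one of the finitely many profiles $(B_1, B_2) \in \cS_1 \times \cS_2$ some player has a strict improvement, while the marginally non-increasing property is respected (fixed costs give us this for free, but the values must be consistent across all resources). The minimality of $|X \setminus Y|$ is what prevents configurations other than the isomorphic copies of $X$ and $Y$ from accidentally stabilizing the game, and this is where the combinatorial heart of the argument lies; I expect the proof to hinge on showing that any non-$(X,Y)$ profile has even larger deviation incentives, so the equilibrium problem really reduces to the single failed exchange.
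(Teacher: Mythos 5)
There is a genuine gap here, and it sits exactly where you locate ``the combinatorial heart'' of the argument. The paper does not work with the raw failure of the exchange axiom alone; it first proves a structural lemma (Lemma~\ref{l.anti-matroid}): for a non-matroid anti-chain there exist $X,Y\in\cS$ and a \emph{triple} $\{a,b,c\}\subseteq X\cup Y$ such that every $Z\in\cS$ with $Z\subseteq (X\cup Y)- a$ contains \emph{both} $b$ and $c$. Your minimality choice is the right first move (it is how that lemma is proved), but minimality does not ``prevent configurations other than the isomorphic copies of $X$ and $Y$ from stabilizing'' --- there may be many other sets $Z\subseteq X\cup Y$ in $\cS$, and the proof must control all of them. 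The lemma's dichotomy ($a\in Z$, or $\{b,c\}\subseteq Z$) is what makes this possible, via the implication $x\notin Z_1^*\Rightarrow\{y,z\}\subseteq Z_1^*$ and its analogue for player 2. Without isolating this forcing property, your enumeration of ``the few candidate PNE configurations'' is not justified. Relatedly, you cannot invoke Theorem~\ref{lem:charac} to reduce to single-element exchange inequalities: that characterization rests on Lemma~\ref{lem:one-to-one-exchange}, which is a matroid fact, and the whole point of this section is that the strategy spaces are \emph{not} matroids.

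The second gap is the gluing. You identify, for each element of $\{x\}\cup(Y\setminus X)$, its two copies with each other (a ``straight'' identification), making the game symmetric in the two players. This symmetry tends to admit equilibria: e.g.\ if the elements of $Y\setminus X$ are cheap enough that a player sharing $x$ wants to defect to $Y$, then the profile in which \emph{both} players play (their copy of) $Y$ and split the cost of the shared $Y\setminus X$ elements is cheaper per player than buying $x$ alone, and no profitable deviation remains; your ``the profile loops back'' is asserted, not verified, and the calibration you defer is precisely where the construction threatens to collapse. The paper avoids this by sharing only \emph{three} resources and identifying them \emph{cross-wise}: $x:=a_1=b_2$, $y:=a_2=b_1$, $z:=c_1=c_2$, so that player~1's pivot element is a forced companion for player~2 and vice versa; it further breaks the residual symmetry with distinct demands $d_1=5$, $d_2=4$ and one genuinely load-dependent cost $c_x(t)=t$ alongside the fixed costs $c_y\equiv 5\tfrac12$, $c_z\equiv 4$. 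To repair your proof you would need to (i) prove the $\{a,b,c\}$ forcing lemma, (ii) switch to an interlocking identification of that flavor, and (iii) carry out the full case analysis over payment vectors, not just over configuration profiles.
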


\section{Non-Decreasing Marginal Cost Functions}\label{sec:discrete-convex}

In this section, we consider non-decreasing marginal cost functions 
on weighted resource buying games in general, 
i.e., $\mathcal{S} = \times_{i \in N} \mathcal{S}_i$ is not necessarily
the cartesian product of matroid base sets anymore.
We prove that for every socially optimal state $S^*$ in a congestion model with non-decreasing marginal costs,
we can define \emph{marginal cost} payments $p^*$ that result in a PNE.
Formally, for a given socially optimal configuration profile $S^*\in \cS$ and a fixed order $\sigma=1,\dots,n$
of the players, we 
let $N_e(S^*):=\{i\in N\mid e\in  S_i^*\}$
denote the players using $e$ in $S^*$,
$N^j_e(S^*):=\{i\in N_e(S^*)\mid  i \le_{\sigma} j\}$ denote the players in $N_e(S^*)$ prior or equal to $j$ in $\sigma$, and
$\ell_e^{\leq_ j}(S^*)=\sum_{i\in N^j_e(S^*)} d_i$ denote the load of these players on $e$ in $S^*$.
Given these definitions, we
allocate  the cost $c_e(\ell_e(S^*))$ for each resource $e\in E$ among the
players in $N_e(S^*)$ by setting
$p_i^e=0$ if $e\not\in S_i^*$ and $p^i_e= 
c_e(\ell_e^{\leq j}(S^*))-c_e(\ell_e^{\leq j-1}(S^*))$ if player $i$ is the $j$-th player
in $N_e(S^*)$ w.r.t. $\sigma$.
Let us call this payment vector \emph{marginal cost pricing}.
\begin{theorem}\label{t.convex-costs}
Let $S^*$ be a socially optimal solution. Then, marginal cost pricing 
induces a PNE. (The proof is in the appendix.)
\end{theorem}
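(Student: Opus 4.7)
The plan is to show that under marginal cost pricing, no unilateral deviation by any player $i$ can be profitable, by relating the (alleged) saving to the change in social cost and invoking the optimality of $S^*$.

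\medskip
\noindent\textbf{Step 1 (Payments and their upper bound).} First I would compute player $i$'s equilibrium cost $\pi_i(S^*,p^*)=\sum_{e\in S_i^*} p_i^e$. The key observation about marginal cost pricing is that $p_i^e=c_e(\ell_e^{\leq j}(S^*))-c_e(\ell_e^{\leq j-1}(S^*))$ where $\ell_e^{\leq j}(S^*)\le \ell_e(S^*)$. Since the marginal cost function is non-decreasing, this gives the crucial inequality
\[
p_i^e\;\le\; c_e(\ell_e(S^*))-c_e(\ell_e(S^*)-d_i)\qquad\text{for all }e\in S_i^*.
\]

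\medskip
\noindent\textbf{Step 2 (Lower bound on the cost of a deviation).} Fix an alternative strategy $(S_i',p_i')$ of player $i$, while all other players stay with $(S_j^*,p_j^*)$. Observe that by construction $\sum_{j\ne i} p_j^e=c_e(\ell_e(S^*))$ if $e\notin S_i^*$, and $\sum_{j\ne i}p_j^e=c_e(\ell_e(S^*))-p_i^e$ if $e\in S_i^*$. Since each bought resource in $S_i'$ must be fully covered, I get a resource-wise lower bound on $p_i'^e$: at least $p_i^e$ for $e\in S_i'\cap S_i^*$, and at least $c_e(\ell_e(S^*)+d_i)-c_e(\ell_e(S^*))$ for $e\in S_i'\setminus S_i^*$. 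Hence the saving (if any) is bounded above by
\[
\Delta\;=\;\sum_{e\in S_i^*\setminus S_i'} p_i^e \;-\;\sum_{e\in S_i'\setminus S_i^*}\bigl[c_e(\ell_e(S^*)+d_i)-c_e(\ell_e(S^*))\bigr].
\]

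\medskip
\noindent\textbf{Step 3 (Link to social cost via Step 1).} Using the bound from Step 1 on each $p_i^e$, I upper bound $\Delta$ by
\[
\sum_{e\in S_i^*\setminus S_i'}\!\bigl[c_e(\ell_e(S^*))-c_e(\ell_e(S^*)-d_i)\bigr]-\sum_{e\in S_i'\setminus S_i^*}\!\bigl[c_e(\ell_e(S^*)+d_i)-c_e(\ell_e(S^*))\bigr].
\]
A direct expansion of $c(S^*)-c(S_i',S_{-i}^*)$ shows that this quantity is exactly $c(S^*)-c(S_i',S_{-i}^*)$, which is $\le 0$ by the social optimality of $S^*$. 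Therefore $\Delta\le 0$, meaning no deviation is profitable, and $(S^*,p^*)$ is a PNE.

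\medskip
\noindent\textbf{Main obstacle.} The nontrivial step is Step 1: establishing that each player pays at most the \emph{last} marginal increment $c_e(\ell_e(S^*))-c_e(\ell_e(S^*)-d_i)$, rather than the actual prefix-dependent increment assigned by $\sigma$. Without the non-decreasing marginal cost assumption this inequality fails, which is precisely why the argument does not extend to arbitrary cost functions. Once this inequality is in place, the remainder is a careful bookkeeping identity that rewrites the bound on $\Delta$ as the negative of the change in social cost, at which point optimality of $S^*$ closes the argument.
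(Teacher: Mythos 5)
Your proof is correct and follows essentially the same route as the paper's: the crucial inequality $p_i^e \le c_e(\ell_e(S^*))-c_e(\ell_e(S^*)-d_i)$ derived from non-decreasing marginal costs, followed by rewriting the bound on the alleged saving as $c(S^*)-c(S_i',S_{-i}^*)\le 0$ and invoking optimality. You are slightly more explicit than the paper about why a deviator must still pay at least $p_i^e$ on resources in $S_i'\cap S_i^*$ and the full marginal increment on new resources, but this is the same argument.
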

We now show that there is a simple polynomial time 
algorithm computing a PNE whenever we are able to 
efficiently compute a best-response. By simply inserting the
players one after the other using their current best-response
with respect to the previously inserted players,
we obtain a PNE. It follows that for (multi-commodity) network games
we can compute a PNE in polynomial time. 
\begin{theorem}\label{t.convex-costs-poly}
For multi-commodity network games
with non-decreasing marginal costs, there is a polynomial time algorithm computing a PNE.
\end{theorem}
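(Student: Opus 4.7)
The plan is to process players sequentially in an arbitrary order $1,2,\ldots,n$ and let each player choose a best response against the previously fixed strategies, pricing edges by the marginal cost they induce at the moment of insertion. Concretely, when it is player $i$'s turn, let $\ell_e^{<i}=\sum_{j<i,\, e\in S_j} d_j$ denote the load accumulated on $e$ by the earlier players, and give each resource $e$ the temporary weight $m_e^{(i)} := c_e(\ell_e^{<i}+d_i)-c_e(\ell_e^{<i})$. Player $i$ then picks $S_i$ to be a minimum-weight $s_i$-$t_i$ path under $m^{(i)}$ in her commodity network, and sets $p_i^e = m_e^{(i)}$ for $e\in S_i$ and $p_i^e=0$ otherwise. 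The algorithm returns $(S,p)$ after $n$ rounds.

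Polynomial running time is immediate. Since the $c_e$ are non-decreasing, the weights $m_e^{(i)}$ are non-negative, so each round amounts to a single-pair non-negative shortest-path computation in the given multi-commodity network, computable in polynomial time (e.g., via Dijkstra). Aggregating over the $n$ rounds gives an overall polynomial-time algorithm. Moreover, the payments telescope correctly: for each $e$, summing $p_j^e$ over the players that selected $e$ in the order of their insertion yields exactly $c_e(\ell_e(S))$, so every resource in $\bigcup_i S_i$ is fully paid.

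The core of the proof is that $(S,p)$ is a PNE. Fix a player $i$ and any candidate deviation $S_i'\in\cS_i$; all other players' strategies and payments are frozen. For $e\in S_i'$, player $i$ must ensure $e$ is bought, hence contribute at least the deficit between $c_e$ evaluated at the post-deviation load and the fixed payments of the other players. If $e\in S_i\cap S_i'$ the load is unchanged and the deficit is exactly $p_i^e=m_e^{(i)}$; if $e\in S_i'\setminus S_i$ the load grows by $d_i$ and the deficit is $c_e(\ell_e(S)+d_i)-c_e(\ell_e(S))$. Since marginal costs are non-decreasing and $\ell_e^{<i}\le \ell_e(S)$, we get
\[
c_e(\ell_e(S)+d_i)-c_e(\ell_e(S)) \;\ge\; c_e(\ell_e^{<i}+d_i)-c_e(\ell_e^{<i}) \;=\; m_e^{(i)}.
\]
Therefore the cost of any deviation $S_i'$ is at least $\sum_{e\in S_i'} m_e^{(i)}$, whereas player $i$'s current cost is exactly $\sum_{e\in S_i} m_e^{(i)}$. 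The optimality of $S_i$ as a shortest $s_i$-$t_i$ path under $m^{(i)}$ gives $\sum_{e\in S_i} m_e^{(i)}\le \sum_{e\in S_i'} m_e^{(i)}$, ruling out any profitable deviation.

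The main subtlety, and the step to check carefully, is the comparison between the \emph{insertion-time} weight $m_e^{(i)}$ that player $i$ locked in and the \emph{termination-time} marginal cost $c_e(\ell_e(S)+d_i)-c_e(\ell_e(S))$ that she would face upon deviation; the hypothesis of non-decreasing marginal costs is precisely what makes the latter dominate the former, and this inequality is the linchpin of the argument. Note that $(S,p)$ need not be socially optimal, so this result is complementary to Theorem~\ref{t.convex-costs}; the same proof works for \emph{any} game in which best responses can be computed efficiently, yielding the stated polynomial-time algorithm for multi-commodity network games as an immediate corollary.
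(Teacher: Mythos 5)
Your proposal is correct and follows exactly the paper's intended argument: insert players one at a time, let each play a best response (a shortest path under the insertion-time marginal weights), and use the non-decreasing marginal cost property to show that the deficit a deviating player would face at termination dominates the weight she locked in at insertion. You have simply made explicit the telescoping of payments and the key inequality that the paper's two-sentence sketch leaves implicit; there is no substantive difference in approach.
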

The proof is straight-forward: Because payments
of previously inserted players do not change in later iterations and marginal cost functions are non-decreasing, the costs of alternative strategies only increase as more players are inserted. Thus, the resulting strategy profile is a PNE.

\section{Conclusions and Open Questions}
We presented a detailed study on the existence and computational
complexity of pure Nash equilibria in resource buying games. Our results imply that the price
of stability is always one for both, games with non-decreasing marginal costs and games with non-increasing marginal costs and matroid structure. Regarding the price of anarchy, even on games with matroid structure it makes a difference whether cost functions are marginally non-increasing  or marginally
 non-decreasing. For non-increasing marginal costs it is known that the price of anarchy is $n$ (the lower bound even holds for singleton games). On the other hand, for
non-decreasing marginal costs we can show that the price of anarchy
for uniform and partition matroids is exactly $n$, while it
is unbounded for graphical matroids even on instances with only two players.
Convergence of best-response dynamics has not been addressed so far and deserves further research.

\bibliographystyle{plain}      
\bibliography{../master-bib}

\appendix{
\section{Omitted Proofs}

\subsection{A Subroutine to Detect $C^*, e^*$ and $i^*$}
We might assume that  the elements of $E'=E_1'\cup \ldots \cup E_n'=\{e_1, \ldots, e_m\}$
are listed by non-increasing current $c'$-weights and such that  the tie-breaking rule
induced by the total order $(E, \preceq)$ is respected. (Note that in each iteration, only the $c'$-weight of the 
chosen bottleneck element might change to some smaller weight. The $c'$-weight of the remaining elements 
keeps the same.)
The following procedure is used to detect  an inclusion-wise minimal cut $C^*$ of some player $i^*$
with the property that the bottleneck element $e^*$ of $C^*$ has maximal $c'$-weight among
all possible cuts in this game:
Initially, set  $C^*=\{e_m\}$ and $k=m$.
If $C^*$ is a cut (i.e., if $E'-C^*$ does not contain a basis) for at least some player,
set $e^*=e_k$ and
decrease $C^*$ to some inclusion-wise minimal cut 
as follows: as long as possible, choose some element $e\in C^*-e^*$  
such that $C^*-e$ remains a cut for at least one of the players $i^*\in N$, and set $C^*\leftarrow C^*-e$;
Return $(C^*, e^*, i^*)$;
Else, update $C^*\leftarrow C^*+e_{k-1}$, and iterate with $k\leftarrow k-1$.

\subsection{Proof of Lemma~\ref{ob.bottleneck-weight-decreases}}

\begin{proof}
For each iteration $k$ let $c_e^{(k)}$ denote the current weight of element $e$, and $\cC_i^{(k)}$ denote
the current set of inclusion-wise minimal cuts for player $i$. 
Note that for each cut $C\in \cC_i^{(k)}$ with $k>1$, there exists some $C'\in \cC_i^{(k-1)}$ such that $C\subseteq C'$ and 
$C$ is obtained from $C'$ by the contraction and deletion operations of iteration $k-1$.
For the sake of contradiction, suppose that $k$ is the first iteration such that $\hat{c}_k > \hat{c}_{k-1}$.
Let $e$ be the bottleneck element chosen in step~\ref{Step:select-element}
of iteration $k-1$.
Thus, the corresponding cut $C(k)$ that was chosen in step~\ref{Step:Choose-cut} of iteration $k$
must be obtained from some larger cut $C'$ by removing at least one element $a\in C'$ with $c_a^{(k-1)} \le \hat{c}_{k-1}=c^{(k-1)}_{e}$,
and, if equality,
with $a \prec e$.
Since the deletion operation of iteration $k-1$ removes only  elements $e'\in E$ of weight $c^{(k-1)}_{e'} \ge c^{(k-1)}_{e}$, and if equality,
those with $e' \succ e$, the element $a$ must have been dropped from $C'$ by contracting $e$, i.e., $a=e$.
Since this contraction operation touches only the matroid of the player chosen in iteration $k-1$, say $i$, it suffices to consider only the
cut sets $\cC_i^{(k)}$ and $\cC_i^{(k-1)}$ and the base sets $\cB_i^{(k)}$ and $\cB_i^{(k-1)}$ of player $i$ in iterations $k$ and $k-1$.
So far, we observed that $a\in C'\cap C(k-1)$ where $C'$ and $C(k-1)$ are both cuts in $\cC_i^{(k-1)}$, and that the  element $a$ vanishes from cut $C'$
by the contraction operation $\cM^{(k-1)}_i \rightarrow \cM^{(k-1)}_i/a$. Thus, $C'-a$ must be a (not necessarily inclusion-wise minimal) cut in $\cM^{(k-1)}_i/a$.
However, since $C'$ is an inclusion-wise minimal cut in $\cM_i^{(k-1)}$, the set
$E_i^{(k-1)}-(C'-a)$ contains some basis $\hat{B}\in \cB_i^{(k-1)}$ with $a\in \hat{B}$.
Thus, $B:=\hat{B}-a$ is a set in $E_i^{(k-1)}-(C'-a)$ with $B+a\in \cB_i^{(k-1)}$, in contradiction to $C'-a$ being a cut in $\cM^{(k-1)}_i/a$.
\end {proof}

\subsection{Proof of Theorem~\ref{lem:charac}}

\begin{proof}
We first proof the "only if" direction. Let $(B,p)$ be a PNE.
Then, by Lemma~\ref{lem:one-to-one-exchange} and the definition of a PNE, we obtain for all $e\in E\setminus F$:
\begin{align*}
c_e(B)=  \sum_{i\in N_e(B)} p_i^e \leq \sum_{i\in N_e(B)}\Delta_i^e(B).
\end{align*}
Note that the $\Delta_i^e(B)$ are well defined as we only consider elements in $E\setminus F$.
Now we prove the "if" direction. 
For all $e\in F$ we pick a player $i$ with 
ex$_i(e)=\emptyset$
and let her pay the entire cost, i.e., $p_i^e=c_e(B)$.
For all $e\in E\setminus F$ and $i\in N_e(B)$, we define
$$ p_i^e=\frac{\Delta_i^e(B)}{\sum_{j\in N_e(B)} \Delta_j^e(B)} \cdot c_e(B),$$
if the denominator is positive, and $p_i^e=0$, otherwise.
Using  \eqref{eq:decharged}, we obtain
\[ p_i^e\leq \Delta_i^e(B)
\text{ for all $e\in E\setminus F$}
\]
proving that $(B,p)$ is a PNE.
\end{proof}

\subsection{Proof of Theorem~\ref{t.weighted-matroids}}

\begin{proof}
We prove that any socially optimal configuration profile $B\in \cB$ 
satisfies ~\eqref{eq:decharged} and, thus, by Theorem~\ref{lem:charac} there
exists a payment vector $p$ such that $(B,p)$ is a PNE.
Assume by contradiction that $B$ does not satisfy~\eqref{eq:decharged}.
Hence, there is an $e\in E\setminus F$ with 
\begin{equation}\label{opt_weighted}
c_e(B)> \sum_{i\in N_e(B)} \Delta_i^e(B).		
\end{equation}
By relabeling indices we may write $N_e(B)=\{1,\dots k\}$ for some $1\leq k\leq n,$ and define
for every $i\in N_e(B)$
the tuple $(\hat B_i,f_i)\in \cB_i\times (E_i-e)$ as the one minimizing 
$\Delta_i(B; e\rightarrow f)$
among all tuples $(B'_i,f) \in \cB_i\times (E_i-e)$ with $B'_i=B_i+f-e\in \cB_i$.
Note that  $(\hat B_i,f_i)$ is well defined as $e\in E\setminus F$.
We now iteratively change the current basis of every player in  $N_e(B)$ 
in the order of their indices to the alternative basis $\hat B_{i}, i=1,\dots k.$ 
This gives a sequence
of profiles $(B^0,B^1,\dots B^k)$ with $B^0=B$ and $B^{i}=(\hat B_{i}, B^{i-1}_{-i})$ for $i=1,\dots k$.
For the cost increase of the new elements $f_i, i\in N_e(B)$, we obtain the key inequality
$c_{f_i}(\ell_{f_i}(B^{i-1}))-c_{f_i}(\ell_{f_i}(B^i))\leq \Delta^e_i(B)$. This inequality holds because cost functions are marginally non-increasing, that is,
the marginal costs only decrease with higher load.
Plugging everything together, yields
\begin{align*}
c(B)-c(B^k)&=\sum_{i=1}^k (c(B^{i-1})-c(B^i))=\sum_{i=1}^k (c_e(\ell_e(B^{i-1}))+c_{f_i}(\ell_{f_i}(B^{i-1}))\\ & 
-c_e(\ell_e(B^{i}))-c_{f_i}(B^{i}))\\
& = c_e(\ell_e(B))-c_e(\ell_e(B^k)) + \sum_{i=1}^k (c_{f_i}(\ell_{f_i}(B^{i-1}))-c_{f_i}(B^{i}))\\
&\geq c_e(\ell_e(B)) - \sum_{i=1}^k \Delta^e_i(B)>0,
\end{align*}
where the first inequality uses $c_e(\ell_e(B^k))=c_e(0)=0$ (note that $e\in E\setminus F$)
and the assumption that cost functions have non-increasing marginal costs.
The second strict inequality follows from~\eqref{opt_weighted}. Altogether, we obtain 
a contradiction to the optimality of $B$.
\end{proof}

\subsection{Proof of Theorem~\ref{t.ackermann}}
Recall that $\cS\subseteq 2^E$ is  an \emph{anti-chain} (with respect to $(2^E, \subseteq)$) if for every $X\in \cS$, 
no proper superset $Y\subset X$
belongs to $\cS$.
Also note that it suffices to consider configuration sets $\cS_i$ that form an anti-chain,
as (due to the non-negative cost functions)
player $i$ would never have an incentive to switch her strategy to a superset of her chosen one. 

We call $\cS$ a \emph{non-matroid} set system if the tuple 
$(E, \{X\subseteq S : S\in \cS\})$ is not a
matroid.
The following Lemma can also be derived from the proof of Lemma 16 in \cite{Ackermann09}.
\begin{lemma}\label{l.anti-matroid}
 If $\cS\subseteq 2^E$ is a non-matroid antichain, then there exist $X,Y\in \cS$ and $\{a,b,c\}\subseteq X\cup Y$
such that each set $Z\subseteq (X\cup Y)-a$ contains both, $b$ and $c$.
\end{lemma}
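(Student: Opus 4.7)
The plan is to exploit the basis-exchange characterization of matroids: an antichain $\cS \subseteq 2^E$ of equal-cardinality sets is the base family of a matroid iff for every $X, Y \in \cS$ and every $a \in X \setminus Y$ there exists $b \in Y \setminus X$ with $X - a + b \in \cS$. I first observe that if $\cS$ contains sets of distinct cardinalities the desired configuration can be found directly by pairing a minimum-size and a larger set; so I may henceforth assume $\cS$ has uniform cardinality. Since $\cS$ is non-matroid, there then exist $X, Y \in \cS$ and $a \in X \setminus Y$ such that $X - a + b \notin \cS$ for every $b \in Y \setminus X$. I would pick such a failing triple minimizing $|X \cup Y|$.

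Set $\cZ := \{Z \in \cS : Z \subseteq X \cup Y \text{ and } a \notin Z\}$; since $Y \in \cZ$ it is nonempty. The lemma reduces to exhibiting two distinct elements in $\bigcap_{Z \in \cZ} Z$. I claim the stronger fact that $Y \setminus X \subseteq \bigcap_{Z \in \cZ} Z$ and $|Y \setminus X| \ge 2$, which together supply the required $b$ and $c$.

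For the inclusion: given any $Z^* \in \cZ$, the triple $(X, Z^*, a)$ is itself failing, because $Z^* \setminus X \subseteq Y \setminus X$ and the original failure excludes $X - a + b \in \cS$ for every such $b$. By the minimality of $|X \cup Y|$ we must have $|X \cup Z^*| \ge |X \cup Y|$, and combined with $Z^* \subseteq X \cup Y$ this forces $X \cup Z^* = X \cup Y$, i.e., $Z^* \supseteq Y \setminus X$. Since this holds for every $Z^* \in \cZ$, we obtain $Y \setminus X \subseteq \bigcap_{Z \in \cZ} Z$. For the cardinality bound: by uniform cardinality $|X \setminus Y| = |Y \setminus X|$; if $|Y \setminus X| = 1$, say $Y \setminus X = \{b\}$, then $X \setminus Y = \{a\}$ and $X - a + b = (X \cap Y) \cup \{b\} = Y \in \cS$, contradicting the failure condition. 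Hence $|Y \setminus X| \ge 2$.

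The main obstacle is the preliminary reduction to the uniform-cardinality case, which I would dispose of by a short direct construction in each non-uniform sub-case (for instance, picking $X$ of minimum cardinality in $\cS$ together with a strictly larger $Y$ and choosing $a \in X \setminus Y$ so that no shorter member of $\cS$ fits inside $(X \cup Y) \setminus \{a\}$). As the authors remark, the entire argument can alternatively be extracted from the proof of Lemma~16 in Ackermann et al.~\cite{Ackermann09}, which handles the analogous structural question for weighted matroid congestion games under marginally non-decreasing costs.
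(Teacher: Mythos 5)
Your core argument for the uniform-cardinality case is correct, and it is close in spirit to the paper's: you minimize $|X\cup Y|$ over failing exchange triples where the paper minimizes $|Y\setminus X|$, but both rest on the same observation that any $Z\in\cS$ lying inside $(X\cup Y)-a$ would itself produce a smaller failing witness. The genuine gap is the reduction to that case. First, the restriction is self-imposed: the basis-exchange characterization quoted in the paper holds for \emph{arbitrary} non-empty antichains (the exchange property by itself forces equicardinality), so no separate treatment of non-uniform $\cS$ is logically required. Second, and more importantly, the non-uniform case that you defer to a ``short direct construction'' is not disposed of by your sketch. Take $E=\{1,2,3,4,5\}$ and $\cS=\{\{1,2\},\{1,3\},\{2,4\},\{3,4,5\}\}$, a non-uniform non-matroid antichain. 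Pairing the minimum-cardinality set $X=\{1,2\}$ with the strictly larger $Y=\{3,4,5\}$ fails for \emph{every} choice of $a\in X\cup Y$: for $a=1$ the members of $\cS$ contained in $\{2,3,4,5\}$ are $\{2,4\}$ and $\{3,4,5\}$, whose intersection is the single element $4$, and the other four choices of $a$ likewise leave an intersection of size at most one. In particular there is no $a\in X\setminus Y$ ``so that no shorter member of $\cS$ fits inside $(X\cup Y)\setminus\{a\}$'', so your recipe produces no candidate at all; a valid configuration for this $\cS$ must be found with a different pair, e.g.\ $X=\{1,3\}$, $Y=\{2,4\}$, $a=1$, $\{b,c\}=\{2,4\}$.

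The clean repair is to run your own argument on a general antichain: a failing triple $(X,Y,a)$ exists precisely because $\cS$ is non-matroid; your inclusion step ($Z\supseteq Y\setminus X$ for every $Z\in\cS$ with $Z\subseteq(X\cup Y)-a$) nowhere uses equicardinality; only the bound $|Y\setminus X|\ge 2$ does. In the residual sub-case $|Y\setminus X|=1$ you must swap roles, exactly as the paper does: take $a$ to be the unique element of $Y\setminus X$ and pick $b,c\in X\setminus Y$ (noting $|X\setminus Y|\ge 2$, since $|X\setminus Y|=1$ would make $Y$ itself an admissible exchange and contradict the failure), whereupon $(X\cup Y)-a=X$ and the antichain property forces any $Z\in\cS$ with $Z\subseteq X$ to equal $X$, hence to contain $b$ and $c$.
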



\begin{proof}
Recall the \emph{basis exchange property} for matroids: an anti-chain $\cB\subseteq 2^E$ is the family of bases of some matroid
if and only if for any $X, Y\in \cB$ and $x\in X\setminus{Y}$ there exists some $y\in Y\setminus{X}$ such that $X-x+y\in \cB$.
Thus, if the anti-chain $\cS\subseteq 2^E$ is a non-matroid, there must exist $X,Y\in \cS$ and $x\in X\setminus{Y}$
such that \emph{for all} $y\in Y\setminus{X}$ the set $X-x+y$ does \emph{not} belong to $\cS$.
We choose such $X, Y$ and $x\in X\setminus{Y}$ with $|Y\setminus{X}|$ minimal 
(among all $Y'\in \cS$ with $X-x+y'\not\in \cS$ for all $y'\in Y'\setminus{X}$).
We distinguish the two cases  $|Y\setminus{X}|=1$ and  $|Y\setminus{X}|\ge 1$:
In case $|Y\setminus{X}|=1$, set $\{a\}=Y\setminus{X}$ and choose any two distinct elements $\{b,c\}\in X\setminus{Y}$.
Note that  $|X\setminus{Y}|\ge 2$ as otherwise, if $X\setminus{Y}=\{x\}$, then $Y=X-x+a$, in contradiction to our assumption.
Now, for any set $Z\subseteq (X\cup Y)-a$, the anti-chain property implies $Z=X$, and therefore $\{b,c\}\subseteq Z$, as desired.

In the latter case $|Y\setminus{X}|\ge 1$, we choose any two distinct elements $\{b,c\}\in Y\setminus{X}$ and set $a=x$.
Consider any $Z\in \cS$ with $Z\subseteq (X\cup Y)-a$ and suppose, for the sake of contradiction, that $\{b,c\}\not\subseteq Z$.
Since $Z\setminus{X}\subseteq Y\setminus{X}$, there cannot exist some $z\in Z\setminus{X}$ with $X-a+z\in \cS$.
However, $|Z\setminus{X}| < |Y\setminus{X}|$ in contradiction to our choice of $Y$.
\end{proof}

\begin{proof}[Proof of the Theorem]
 Let $\cS_1\subseteq 2^{E_1}$ and $\cS_2\subseteq 2^{E_2}$ be the two strategy spaces
for player one and player two, respectively, both isomorphic to our given non-matroid anti-chain $\cS\subseteq 2^{E}$.
In the following, we describe the game $G$ by defining the demands and costs and describing how the
resources and strategy spaces interweave:
For each player $i=1,2$, choose $X_i, Y_i\in \cS_i$ and $\{a_i, b_i, c_i\}\subseteq E_i$
as described in Lemma~\ref{l.anti-matroid}.
In our game $G$, the two players have only three resources in common, i.e., $\{x,y,z\}= E_1\cap E_2$.
We set $x:=a_1=b_2$, $y:=a_2=b_1$ and $z:=c_1=c_2$. All other resources in $E_i\setminus{\{x,y,z\}}$ are
exclusively used by player $i$ for $i=1,2$.
We define the (load-dependent) costs $c_e(t)$ , $t\in \R_+$ for the resources $e\in \tilde{E}=E_1\cup E_2$  as follows:
all elements in $(X_1\cup X_2\cup Y_1\cup Y_2)\setminus{\{x,y,z\}}$ have a cost of zero, and all elements
in $E_1\setminus{(X_1\cup Y_1)}$ and in $E_2\setminus{(X_2\cup Y_2)}$ have some very large cost $M$.
The costs on $\{x,y,z\}$ are defined as
$c_x(t)=t, c_y(t)=5\frac{1}{2}$ and $c_z(t)=4$.
Note that each of these cost functions is non-negative, non-decreasing and marginally non-increasing .

Now, suppose that $(Z^*, p^*)$ with $Z^*=(Z_1^*, Z_2^*)\in \cS_1\times \cS_2$ and $p^*=(p_1^*, p_2^*)\in \R_+^{E_1}\times \R_+^{E_2}$
were a PNE for the game as described above with demands $d_1=5$ and $d_2=4$.
Choosing $M$ large enough ensures that $Z^*_i\subseteq X_i\cup Y_i$ for each player $i\in \{1,2\}$.
Moreover, by the choice of $X_i$ and $Y_i$ in the proof of Lemma~\ref{l.anti-matroid},
there exist $S_1, T_1\in \cS_1$ with $x\in S_1, \{y,z\}\cap S_1=\emptyset$ and $x\not\in T_1 \supseteq \{y,z\}$,
as well as $S_2, T_2\in \cS_2$ with $y\in S_2,~  \{x,z\}\cap S_2=\emptyset$ and $y\not\in T_2 \supseteq \{x,z\}$.
By Lemma~\ref{l.anti-matroid}, it follows from $Z_i^*\subseteq X_i\cup Y_i$ that
\begin{equation}
 \label{eq.non-matroid}
x\not\in Z_1^* \Longrightarrow \{y,z\}\subseteq Z_1^* \quad\mbox{ and }\quad y\not\in Z_2^* \Longrightarrow \{x,z\}\subseteq Z_2^*.
\end{equation}
We now show that neither  $x\in Z_1^*$, nor  $x\not\in Z_1^*$ can be true.
This would be the desired contradiction to our assumption that the game possesses a PNE.

For each player $i\in \{1,2\}$, and each configuration $S_i\in \cS_i$, let
$c_i^*(S_i)$ denote the price that player $i$ would have to
pay so that the resources in $S_i$ are bought, given that the other player $j\in \{1,2\}\setminus{\{i\}}$
sticks to her strategy $(Z^*_j, p^*_j)$.
Consider the case $x\not\in Z_1^*$: By (\ref{eq.non-matroid}), it follows that $\{y,z\}\subseteq Z_1^*$.
Thus, since $Z_1^*\subseteq X_1\cup Y_1$, the only resources in $Z_1^*$ of non-zero cost are $y$ and $z$, 
i.e.,
$p_1^*(Z_1^*)=p_1^*(y)+p_1^*(z)\le c_1^*(S_1)=d_1=5$.
Note  that $y\not\in Z_2^*$ is not possible, as otherwise player 1 would need to pay $c_1^*(y)=5\frac{1}{2}$ to buy resource $y$
which is more than the price of $d_1=5$ needed to buy  $S_1$.
Thus,
 $y\in Z_2^*$ must be true. It follows that $p_2^*(z)=0$,
as otherwise $p^*_2(Z_2^*)\ge p_2^*(y)+p^*_2(z) > p_2^*(y)=c_2^*(S_2)$.
Thus, since $z\in Z_1^*$, player 1 has to pay $p_2^*(z)=4$ in order to buy resource $z$.
Since $p_1^*(Z_1^*)\le c_1^*(S_1)=5$, it follows that $p_1^*(y)\le 1$, and therefore, since $y\in Z_2^*$,
$p_2^*(y)\ge 4\frac{1}{2}$. However, 
in this case player 2 could use resource $z$ for free and therefore switch to strategy $T_2$
for which she would only need to pay the price for resource $x$ which is $d_2=4$.
So, $x\not\in Z_1^*$ is not possible in a PNE.

It remains to consider case  $x\in Z_1^*$: Then $p_1^*(Z_1^*) =p_1^*(x)+p_1^*(y)+p_1^*(z)\ge p_1^*(x)=c_1^*(S_1)$
implies $p_1^*(y)=p_1^*(z)=0$. Therefore
$y\not\in Z_2^*$, since otherwise,  $p_2^*(y)=5\frac{1}{2}$,
so that player 1 could use resource $y$ for free and therefore switch to strategy $T_1\in \cS_1$ of cost $4=c_z$.
However, if $y\not\in Z^*_2$, then $\{x,z\}\subseteq Z_2^*$ by equation (\ref{eq.non-matroid}).
Hence, $p_2^*(z)=4$ (since $p_1^*(z)=0$).
It follows that $p_2^*(x) \le 1\frac{1}{2}$, since otherwise, player 2 would switch to strategy $S_2\in \cS_2$
and pay only the price of $5\frac{1}{2}$ for resource $y$.
Thus, $p_1^*(x) \ge 7\frac{1}{2}$ which is strictly greater than the price of $5\frac{1}{2}$ which player 1 would need to pay
if she switches to strategy $T_1$.
Hence, also $x\in Z_1^*$ is not possible in a PNE, which finishes the proof.

\end{proof}

\subsection{Proof of Theorem~\ref{t.convex-costs}}

\begin{proof}
Let $p=(p_1, \ldots, p_n)$ be the payment vector obtained by marginal cost pricing.
Suppose there is a player that unilaterally improves
by deviating to some $(S'_i,p'_i)$. 
Thus,  $S'=(S_1^*, \ldots, S_{i-1}^*, S_i', S_{i+1}^*, \ldots, S_n)$,
and $p_i'$ differs from $p_i$ only on elements in $S_i'\Delta S_i^*=S_i'\setminus{S_i^*}\cup S_i^*\setminus{S_i'}$, while $p_j'=p_j$ for all other players $i\neq j\in N$.
For the payoff difference, we therefore calculate that 
\[ \pi_i(S',p')-\pi_i(S^*,p)=\sum_{r\in S'_i\setminus S^*_i}  \Big(c_r(\ell_r(S^*)+d_i)-c_r(\ell_r(S^*))\Big)-\sum_{r\in S^*_i\setminus S'_i} p_i^r <0.\]
Because costs are marginally non-decreasing, we obtain
$  p_i^r\leq c_r(\ell_r(S^*))-c_r(\ell_r(S^*)-d_i)  \text{ for all }r\in S^*_i.$
 Using this inequality we obtain
 \begin{align*}c(S')-c(S^*)&=\sum_{r\in S'_i\setminus S^*_i}  \Big(c_r(\ell_r(S^*)+d_i)-c_r(\ell_r(S^*))\Big)-\sum_{r\in S^*_i\setminus S'_i}
 \Big(c_r(\ell_r(S^*))-c_r(\ell_r(S^*)-d_i) \Big)\\
 & \leq \sum_{r\in S'_i\setminus S^*_i}  \Big(c_r(\ell_r(S^*)+d_i)-c_r(\ell_r(S^*))\Big)-\sum_{r\in S^*_i\setminus S'_i}
p_i^r\\
& <0,
 \end{align*}
a contradiction to $S^*$ being optimal.
\end{proof}

\end{document}